\documentclass[11pt]{article}

\usepackage[final]{acl}
\usepackage{times}
\usepackage{latexsym}
\usepackage[T1]{fontenc}
\usepackage[utf8]{inputenc}
\usepackage{microtype}
\usepackage{inconsolata}
\usepackage{placeins}

\usepackage[skip=2pt]{caption}

\usepackage{wrapfig}
\usepackage{hyperref}
\usepackage{url}
\usepackage{booktabs}

\usepackage{lineno}

\definecolor{darkblue}{rgb}{0, 0, 0.5}
\hypersetup{colorlinks=true, citecolor=darkblue, linkcolor=darkblue, urlcolor=darkblue}

\usepackage{graphicx}
\usepackage{subcaption}
\usepackage{amsthm}
\usepackage{amsmath}
\usepackage{amssymb}
\usepackage{xspace}
\usepackage{bm}
\usepackage{multirow}
\usepackage{makecell}
\usepackage{xcolor}
\usepackage{tcolorbox}
\usepackage{float}
\usepackage{algorithm}
\usepackage{algpseudocode}
\newcommand{\myparatight}[1]{\smallskip\noindent{\bf {#1}:}~}
\newcommand{\alg}{\texttt{NeuronGuard}\xspace}

\newtheorem{assumption}{Assumption}
\newtheorem{theorem}{Theorem}
\newtheorem{lemma}{Lemma}
\newtheorem*{remark}{Remark}

\usepackage{color, colortbl}
\definecolor{greyL}{RGB}{229,229,255}

\title{NeuronGuard: Robust LLM Safety Alignment via Ablation-Aware Safety Signal Redistribution}

\author{Anjun Gao$^{1}$\quad Yueyang Quan$^{2}$ \quad Yufei Xia$^{1}$\thanks{Yufei Xia performed this research when he was under the supervision of Minghong Fang.} \quad Zhuqing Liu$^{2}$ \quad Minghong Fang$^{1}$ \\
$^1$University of Louisville,
$^2$University of North Texas
}

\begin{document}

\maketitle


\begin{abstract}

Safety alignment in large language models (LLMs) remains brittle against a growing spectrum of attacks. Jailbreak attacks bypass safety mechanisms through crafted prompts, while neuron-level attacks directly prune safety-critical neurons post-deployment. Both exploit a common weakness: safety-relevant information concentrates in a sparse neuron subset. We present \alg, a fine-tuning-stage defense that simultaneously hardens LLMs against both attack classes by redistributing safety signals across a broader set of neurons. \alg dynamically identifies safety-critical neurons via periodically refreshed per-layer linear classifiers, forces refusal behavior under deliberate neuron ablation, and applies KL-divergence regularization for distributional consistency. A randomized gradient projection strategy preserves downstream task utility by resolving conflicts between the defense and task objectives. We provide a formal guarantee that \alg strictly reduces the attack success rate (ASR) upper bound, and experiments across three LLMs, six state-of-the-art attack strategies, and multimodal settings confirm near-zero ASR while maintaining task accuracy, including against white-box adaptive adversaries.

\end{abstract}


\section{Introduction}
\label{sec:intro}

Large language models (LLMs)~\citep{qwen2.5,grattafiori2024llama,Falcon3,team2024gemma,abdin2024phi,brown2020language,qwen35blog,blakeman2025nemotron} have demonstrated remarkable capabilities across diverse applications, from question answering and code generation to complex reasoning and planning. Beyond these foundational tasks, LLMs have been increasingly deployed as general-purpose intelligence components powering autonomous agents, dialogue systems, and tool-augmented workflows. This broad and rapid adoption has made the safety and robustness of LLMs a critical concern, particularly given the potential consequences of deploying misaligned or manipulable models in high-stakes domains.

To ensure safe deployment, these models undergo extensive alignment procedures, including supervised fine-tuning~\citep{ouyang2022training}, reinforcement learning from human feedback~\citep{bai2022training}, and preference optimization~\citep{rafailov2023direct} to instill refusal behavior against harmful requests. However, this alignment has proven fragile across a growing spectrum of post-alignment attacks. 
Jailbreak attacks~\citep{chao2025jailbreaking,mehrotra2024tree,chang2024play,zou2023universal,liu2023autodan} bypass safety mechanisms through crafted prompts, adversarial suffixes, or role-playing scenarios without modifying model parameters. At a deeper level, neuron-level attacks~\citep{wu2025neurostrike} strike at the internal structure of aligned models: prior work has revealed that safety alignment tends to concentrate in a sparse set of specialized neurons, and NeuroStrike~\citep{wu2025neurostrike} exploits this by identifying and pruning safety-critical neurons at inference time, disabling safety mechanisms without any adversarial prompt. Both attack classes share a common root cause, namely that safety-relevant information concentrates in a sparse neuron subset, yet they exploit it from orthogonal directions: one through the input and one through the model's parameters.

Existing defenses are inadequate against this dual threat. Prompt-level defenses such as perplexity filtering~\citep{alon2023detecting}, SmoothLLM~\citep{robey2023smoothllm}, and GradSafe~\citep{xie2024gradsafe} are entirely ineffective against attacks that operate at the parameter level. 
Existing model-level defenses, including CAT~\citep{xhonneux2024efficient} and LED~\citep{zhao2024defending}, offer more direct protection but remain vulnerable to direct neuron manipulation.
SafeNeuron~\citep{wang2026safeneuron} comes closest by freezing safety-critical neurons during optimization, yet its \emph{static} identification strategy cannot account for representational shifts as model weights evolve. More fundamentally, none of these approaches addresses the root cause: safety information remains concentrated in a sparse neuron subset, leaving the model vulnerable to any adversary who can identify and suppress it.

\noindent \textbf{Our contributions: }%
To address these challenges, we present \alg, a fine-tuning-stage defense framework that hardens LLMs against both jailbreak and neuron-level attacks by redistributing safety signals across a broader set of neurons. Our method builds on a key observation: the brittleness of current alignment stems from over-concentration of safety-relevant information in a sparse neuron subset, and robustness can be improved by training the model to maintain safe behavior even when primary safety neurons are suppressed. To identify target neurons, \alg fits lightweight per-layer linear classifiers that separate harmful from benign representations at each fine-tuning step; these classifiers are periodically refreshed to track the \emph{current} safety-critical neuron set as weights evolve. Building on this, \alg constructs a safety objective that deliberately ablates the identified neurons and requires the model to refuse harmful queries in their absence, forcing gradient updates into the remaining neurons and progressively broadcasting safety representations across the network. A KL-divergence regularizer constrains the ablated model's output to match the standard forward pass, ensuring behavioral consistency. As safety signals spread to new neurons, those neurons are identified and ablated in turn, continuously expanding representational redundancy. Finally, since safety and task objectives frequently produce conflicting gradients, \alg applies a randomized projection that removes the opposing component from one gradient without systematically favoring either objective, preserving utility without sacrificing robustness.

We conduct evaluations across diverse attack settings, demonstrating that \alg effectively defends against both jailbreak attacks and neuron-level attacks while maintaining model utility on downstream fine-tuning tasks. We evaluate against six representative attack strategies, assess robustness in multimodal settings, and include challenging adaptive adversaries with full knowledge of our defense. Across all settings, \alg consistently reduces the attack success rate (ASR) to near zero while preserving task accuracy, with modest computational overhead suitable for practical fine-tuning pipelines.

Our contributions are summarized as follows:

\begin{list}{\labelitemi}{\leftmargin=1em \itemindent=0em \itemsep=0.3em \parsep=0em \topsep=0em \partopsep=0em}
	
	\item We propose \alg, a novel fine-tuning-stage defense framework that dynamically identifies safety-critical neurons via periodically refreshed per-layer linear classifiers, and broadcasts safety signals across a broader set of neurons through ablation-robust optimization and randomized gradient projection, providing simultaneous resistance to both jailbreak attacks and neuron-level attacks.
	
	\item We provide a formal theoretical guarantee showing that \alg strictly reduces the ASR upper bound relative to an undefended model under neuron-level attack, with the reduction quantified in terms of the ablation coverage ratio, per-neuron attack success probability, and the KL-divergence regularization strength.
	
	\item Comprehensive evaluations across diverse attack strategies and deployment scenarios demonstrate that \alg effectively hardens safety alignment while preserving model utility, including under adaptive adversaries explicitly designed to evade our defense.
	
\end{list}


\section{Related work} \label{sec:related}

\myparatight{Post-alignment attacks on LLMs}%
Post-alignment attacks are inference-time adversarial strategies that bypass safety alignment in deployed LLMs. We focus on two representative families: jailbreak attacks and neuron-level attacks.
Jailbreak attacks subvert safety alignment through crafted inputs without modifying model parameters. Generation-based methods such as PAIR~\citep{chao2025jailbreaking}, TAP~\citep{mehrotra2024tree}, and Puzzler~\citep{chang2024play} synthesize adversarial prompts via iterative querying or puzzle-like reformulations. Optimization-based methods take a more direct route: GCG~\citep{zou2023universal} searches over adversarial token suffixes via gradient guidance, while AutoDAN~\citep{liu2023autodan} combines gradient signals with fluent prompt generation.
Rather than manipulating inputs, neuron-level attacks exploit the internal structure of aligned models directly. 
Safety alignment is concentrated in a sparse set of specialized neurons~\citep{wu2025neurostrike}, making it both localizable and fragile. 
NeuroStrike~\citep{wu2025neurostrike} compromises an aligned model by locating neurons responsible for refusal behavior and suppressing their activations during inference, thereby inducing unsafe outputs without requiring specially crafted inputs.

\myparatight{Defenses against post-alignment attacks}%
Existing defenses fall into two categories, each with fundamental limitations. \emph{Prompt-level defenses} intercept adversarial inputs at inference time without altering the model: Perplexity filtering~\citep{alon2023detecting,cheng2025secure} flags anomalous inputs, SmoothLLM~\citep{robey2023smoothllm} aggregates responses over randomly perturbed inputs via majority voting, and GradSafe~\citep{xie2024gradsafe} detects jailbreak attempts through gradient patterns on safety-critical parameters. However, these defenses are inherently ineffective against neuron-level attacks, which operate at the parameter level and require no adversarial prompt.
\emph{Model-level defenses} modify the model's parameters or training procedure, offering stronger but still insufficient protection. CAT~\citep{xhonneux2024efficient} performs adversarial training in the continuous embedding space, and LED~\citep{zhao2024defending} edits safety-critical layers via hidden-state activation analysis; both target prompt-based threats and offer little resistance to direct neuron manipulation. SafeNeuron~\citep{wang2026safeneuron} freezes safety-critical neurons during direct preference optimization to encourage redundant safety pathways, but its static identification strategy cannot capture representation shifts during weight updates. Fundamentally, none of these approaches trains the model to distribute safety signals broadly, which is the root cause of their fragility.

Note that a complementary line of research has investigated post-attack forensic attribution, which aims to determine how a successful attack was carried out and identify the source or root cause responsible for its success~\citep{jia2024tracing,gao2026patching,zhang2025traceback,zhang2026Who,gao2026beware}. These techniques provide valuable information for diagnosing compromised systems and understanding the origins of successful attacks, thereby complementing conventional defense mechanisms. Nevertheless, post-attack forensics addresses a different stage of the attack lifecycle from the mitigation strategies considered in this work. Our study instead focuses on preventing and mitigating attacks, while forensic investigation and attribution after a successful attack are beyond the scope of this paper.


\section{Threat model} 
\label{sec:problem}

\myparatight{Attacker's goal and knowledge}%
We consider post-alignment attacks that aim to bypass a model’s safety alignment and elicit harmful responses.
This framing is consistent with both jailbreak attacks~\citep{chao2025jailbreaking,mehrotra2024tree,chang2024play,zou2023universal,liu2023autodan} and neuron-level attacks~\citep{wu2025neurostrike}, which are inherently post-hoc. We further assume a worst-case setting where the attacker has full access to the post-finetuning weights and can manipulate internal parameters or neurons, a realistic assumption for open-weight LLMs whose parameters are publicly available.

\myparatight{Defender's goal and knowledge}%
The defender aims to maintain safety alignment under post-alignment attacks, ensuring consistent refusal of harmful queries.
We assume the defender operates during fine-tuning with full access to the training pipeline, including data, weights, and objectives. This is natural: the developer who conducts fine-tuning is directly responsible for the safety properties of the released model. Critically, fine-tuning is the last stage under the developer's control before public deployment, as once released, the developer cannot control how weights are accessed or modified. Hardening the model at this stage therefore proactively reduces the attack surface, without requiring any intervention at inference time or any assumptions about the attacker's specific strategy. Following prior work~\citep{wu2025neurostrike,wang2026safeneuron}, we assume the defender holds a small safety probe set of harmful and benign queries.
This assumption is reasonable as any responsible model developer would already maintain a small set of safety-critical queries as part of standard quality control before deployment, and such a probe set requires no knowledge of the attacker's specific strategy.


\section{Our method}
\label{our_method}

\alg addresses the root cause of alignment fragility through three coupled stages, as illustrated in Fig.~\ref{fig:neuronguard} (Appendix). The first stage fits per-layer linear classifiers to identify the sparse neuron subset dominating safety-gating decisions, refreshed periodically as weights evolve. The second stage deliberately suppresses these critical neurons during training, forcing safety representations to redistribute across a broader population so that no small subset serves as a single point of failure. The third stage resolves conflicts between the defense and task objectives via randomized gradient projection, preserving utility without sacrificing robustness. The complete procedure is summarized in Algorithm~\ref{alg:main} (Appendix).

\subsection{Stage 1: Identifying safety-critical neurons}
\label{subsec:identify}

Safety signals must be redistributed away from the neurons that currently dominate the
model's refusal behavior. 
The first procedure is therefore to identify those neurons precisely
at each point in fine-tuning. 
We achieve this by fitting a lightweight linear classifier at each layer
that separates harmful from benign representations, and then reading off which neurons
contribute most to that separation.

Concretely, the defender maintains a safety probe set consisting of two small subsets: a harmful query set $D_\text{unsafe}$ and a benign query set $D_\text{safe}$.
Each prompt $x \in D_\text{safe} \cup D_\text{unsafe}$
is assigned a binary label ($y=1$ for harmful, $y=0$ for benign). 
We then feed $x$ through the current model and collect the activation vectors $h_l(x)$ at each layer $l$, and a per-layer logistic classifier is trained on these activations:
\begin{equation}
	\label{eq:classifier}
	\hat{y}_l(x) = \sigma (W_l^\top h_l(x) + b_l), \quad \forall\, x \in D_\text{safe} \cup D_\text{unsafe},
\end{equation}
where $\sigma(\cdot)$ is the sigmoid function and $W_l, b_l$ are the learnable classifier
parameters. The weight vector $W_l$ encodes neuron-level importance: a large positive
weight $W_{l,i}$ indicates that neuron $i$ is a strong predictor of the ``unsafe'' label, and
hence a primary carrier of safety-relevant information. We therefore rank neurons by their
corresponding classifier weights and select the top-$\rho$ fraction as the safety-critical set for
layer $l$:
\begin{equation}\label{eq:topfrac}
	\mathcal{I}_{\text{safety}, l}
	= \operatorname{TopFrac}\!\left(W_l,\; \rho\right),
\end{equation}
where $\operatorname{TopFrac}(\cdot, \rho)$ returns the indices of the largest $\rho \cdot d_l$
elements, and $d_l$ is the number of neurons in layer $l$.
Because model representations shift as weights are updated, a fixed neuron set identified
at initialization quickly becomes stale. We therefore refresh the classifiers, and
recompute $\mathcal{I}_{\text{safety}, l}$, every $N$ optimization steps. This periodic
re-identification ensures that Stage 2 always targets the \emph{current} safety-critical
neurons, not an outdated approximation, and is the mechanism that produces the iterative
broadening effect described next.

\subsection{Stage 2: Ablation-robust safety optimization}
\label{sec:optimization}

With the current safety-critical neuron sets $\{\mathcal{I}_{\text{safety}, l}\}$ in hand, we
construct a safety objective that drives the model to spread safety representations beyond
those neurons. The objective has three terms, each enforcing a distinct aspect of robust
safety.

\myparatight{Refusal loss under normal execution}%
The first term ensures the model refuses harmful queries during normal inference. Let
$x_{\text{h}} \in D_\text{unsafe}$ denote a harmful input and $y_{\text{ref}}$ a fixed
refusal-template target (e.g., ``I'm sorry, but I cannot assist with that request.''). We
apply standard cross-entropy supervision:
\begin{equation}\label{eq:lr}
	L_{\text{ref}} = \mathrm{CE}\!\left(y_{\text{ref}},\; p_{\theta}(y \mid x_{\text{h}})\right),
\end{equation}
where $p_{\theta}(y \mid x_{\text{h}})$ is the model's output distribution under parameters
$\theta$. This term provides a baseline safety signal, but on its own, only reinforces the
neurons that already encode refusal behavior. Safety information, therefore, remains
concentrated in $\mathcal{I}_{\text{safety}, l}$, leaving the model vulnerable whenever
those neurons are suppressed. The next term addresses this limitation directly.

\myparatight{Refusal loss under safety-neuron ablation}%
To compel other neurons to take on safety responsibilities, we introduce a second
supervision term that operates on a \emph{modified} forward pass in which the
safety-critical neurons are deliberately zeroed out. Specifically, for each layer $l$, we
construct ablated activations $\tilde{h}_{i,l} = 0$ if $i \in \mathcal{I}_{\text{safety},l}$,
and $\tilde{h}_{i,l} = h_{i,l}$ otherwise, then propagate them through the remaining
layers to obtain the ablated output distribution $p_{\theta_{\text{abl}}}(y \mid x_{\text{h}})$.
We then apply refusal supervision on this ablated forward pass:
\begin{equation}\label{eq:lr_abl}
	L_{\text{ref}}^{\text{abl}} = \mathrm{CE}\!\left(y_{\text{ref}},\; p_{\theta_{\text{abl}}}(y \mid x_{\text{h}})\right).
\end{equation}
Because the neurons in $\mathcal{I}_{\text{safety}, l}$ are zeroed out, gradients from
$L_{\text{ref}}^{\text{abl}}$ cannot flow back into them and must instead propagate to the
remaining neurons, training them to carry safety-relevant information. As these neurons
absorb safety information over successive steps, the refreshed classifiers identify them
as part of the new safety-critical set and ablate them in turn, pushing the safety signal
progressively outward to an ever-wider set of neurons.

\myparatight{Distributional consistency regularization}%
Optimizing $L_{\text{ref}}$ and $L_{\text{ref}}^{\text{abl}}$ in isolation does not prevent the ablated
model from producing correct refusal tokens while diverging sharply on the remaining
vocabulary, leading to unstable training dynamics. To prevent this, we add a
KL-divergence regularizer between the two forward passes on the same harmful input:
\begin{equation}\label{eq:kl}
	L_{\text{cons}} = D_{\text{KL}}(p_{\theta_{\text{abl}}}(y \mid x_{\text{h}}) \;\|\; p_{\theta}(y \mid x_{\text{h}})).
\end{equation}

This term enforces that ablating safety neurons preserves the model’s overall output distribution, not just the refusal token, preventing degenerate collapse.

\begin{figure}[t]
    \centering
    \includegraphics[width=0.45\textwidth]{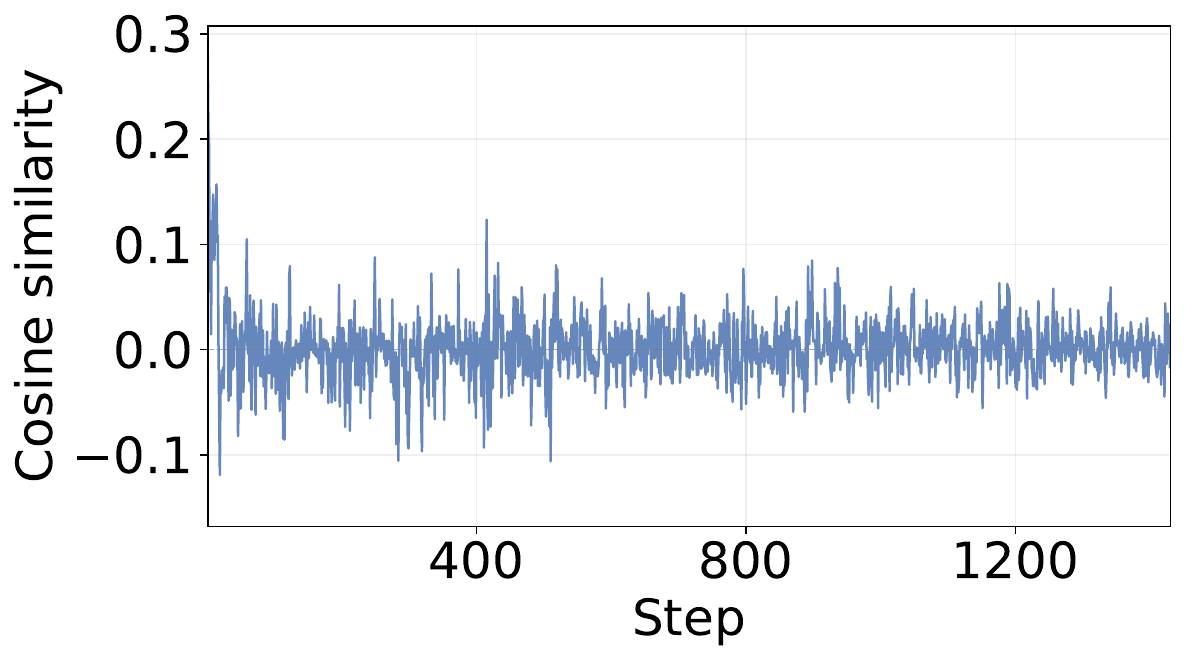}
    \caption{Cosine similarity between $g_{\text{user}}$ and $g_{\text{safe}}$ across fine-tuning steps on SST2~\citep{socher2013recursive} with Llama.}
    \label{fig:grad-sign-pos-neg}
	\vspace{-0.2in}
\end{figure}

\myparatight{Full safety objective}%
Combining the three terms, the complete safety objective is:
\begin{equation}\label{eq:defense}
	L_{\text{safe}} = L_{\text{ref}} + L_{\text{ref}}^{\text{abl}} + L_{\text{cons}}.
\end{equation}
$L_{\text{ref}}$ ensures that the model refuses harmful queries in standard setting; $L_{\text{ref}}^{\text{abl}}$
forces safety representations to spread beyond the identified safety neurons; and
$L_{\text{cons}}$ maintains distributional consistency between the two settings.
Together, these three terms promote distributed safety encoding across neurons, avoiding concentration in only a few neurons.

\subsection{Stage 3: Resolving gradient conflicts via randomized projection}

Beyond the safety objective, the model must simultaneously learn the user's downstream
task. Let $L_{\text{user}}$ denote the task loss computed on the user-provided fine-tuning
dataset $D_{\text{user}}$. The overall fine-tuning objective is:
\begin{equation}\label{eq:total}
	L_{\text{total}} = L_{\text{user}} + L_{\text{safe}},
\end{equation}
where $L_{\text{safe}}$ is the safety objective defined in Eq.~\eqref{eq:defense}.
However, naively minimizing $L_{\text{total}}$ with a single gradient step is problematic
in practice. Let $g_{\text{user}} = \nabla_{\theta} L_{\text{user}}$ and
$g_{\text{safe}} = \nabla_{\theta} L_{\text{safe}}$ denote the gradients of the two
objectives. As shown in Figure~\ref{fig:grad-sign-pos-neg}, in a large fraction of fine-tuning steps we observe $\langle g_{\text{user}},\, g_{\text{safe}} \rangle < 0$, meaning the two
gradients point in conflicting directions. Directly summing them would therefore cause
destructive interference: progress on one objective would come at the expense of the
other, ultimately harming task utility.

To address this, we introduce a randomized gradient correction that resolves conflicts
at each fine-tuning step without systematically favoring either objective. At each step, we
first compute $\langle g_{\text{user}},\, g_{\text{safe}} \rangle$. If no conflict is
detected, both gradients are used unchanged. If a conflict is detected, we randomly
sample an ordering $(g_1, g_2)$ of $(g_{\text{user}}, g_{\text{safe}})$, where $g_1$ is
the gradient to be corrected and $g_2$ is kept unchanged. The corrected gradient is:
\begin{equation}\label{eq:projection}
	\hat{g}_1 = g_1 - \frac{\langle g_1,\, g_2 \rangle}{\| g_2 \|_2^2}\, g_2, \qquad
	\hat{g}_2 = g_2,
\end{equation}
which removes from $g_1$ precisely the component that opposes $g_2$, while leaving all
non-conflicting components intact. The projection is applied to only one gradient rather
than both, to avoid unnecessarily shrinking the effective update magnitude. Randomizing
which gradient serves as $g_1$ ensures that neither objective is systematically
deprioritized across fine-tuning. When no conflict is detected, we simply set
$\hat{g}_1 = g_{\text{user}}$ and $\hat{g}_2 = g_{\text{safe}}$.

The final gradient update is then:
\begin{equation}\label{eq:final_grad}
	g_{\text{final}} = \hat{g}_1 + \hat{g}_2.
\end{equation}
This ensures that the two objectives always make progress in mutually non-opposing
directions, preserving both task utility and the safety robustness gained in Stage~2.
The model parameters are then updated as $\theta \leftarrow \theta - \gamma \cdot g_{\text{final}}$,
where $\gamma$ is the learning rate.


\section{Theoretical analysis} 
\label{theoretical_analysis}

We provide theoretical guarantees for our proposed \alg under neuron-level attack
(which is more powerful than jailbreak attacks based on our experimental observations),
answering whether safety-neuron ablation provably reduces ASR and under what conditions.
Guarantees are stated in terms of measurable quantities (classifier accuracy, ablation
coverage, and KL divergence) estimable from fine-tuning logs. Let $p_\theta(\cdot\mid x)$
denote the model's output distribution and $h_{i,l}(x)$ the activation of neuron $i$ at
layer $l$. The ground-truth safety-neuron set at layer $l$ is $S_{l}=\{\,i \mid
\mathbb{E}_{x\sim\mathcal{D}_{\text{unsafe}}}[h_{i,l}(x)]-
\mathbb{E}_{x\sim\mathcal{D}_{\text{safe}}}[h_{i,l}(x)]>\tau,\; i\in[1,d_l]\,\}$,
where $\tau>0$ and $|S_l|=s_l\ll d_l$. The ablation coverage ratio is defined as
$\rho_l := |S_l \cap \mathcal{I}_{\text{safety},l}|/|S_l|$. We first provide the necessary
assumptions.

\begin{assumption}[Approximate linear separability and sufficient ablation coverage]
\label{asmp1}
There exists a linear classifier $\psi=(W_l,b_l)$ with
$\Upsilon(h_l(x);\psi)=\mathbf{1}\{W_l^\top h_l(x)+b_l\ge 0\}$
achieving small misclassification rates $\delta_u,\delta_b\in(0,1)$ on unsafe and benign queries respectively, where $\Upsilon(\cdot;\psi)$ is the binary decision rule of the linear classifier $\hat{y}_l(x)$ in Eq.~\eqref{eq:classifier}.
The ablation coverage satisfies $\rho_l \ge \rho_0$ for some constant $\rho_0\in(0,1)$.
\end{assumption}

\begin{assumption}[Bounded attack capability]
\label{asmp2}
The attacker perturbs only neurons in $S_l$, with $\|\delta\|_2\le\varepsilon$.
After ablation, the effective perturbation support is restricted to $S_l\setminus\mathcal{I}_{\text{safety},l}$.
\end{assumption}

\begin{assumption}[Local Lipschitz continuity of the output head]
\label{asmp3}
The subnetwork from layer~$l$ to the output, $z_\theta(\cdot)$, satisfies
$\|z_\theta(h_l)-z_\theta(h_l')\|_2 \le C_L\|h_l-h_l'\|_2$
in a neighborhood of $h_l(x)$, and all admissible perturbations remain within this neighborhood, where $C_L > 0$ is a constant.
\end{assumption}

\begin{assumption}[Bounded KL divergence between ablated and standard outputs]
\label{asmp4}
At convergence, $\mathbb{E}_{x\sim\mathcal{D}_{\text{unsafe}}}[ D_{\mathrm{KL}}(
p_{\theta_{\text{abl}}}(y\mid x)\|p_{\theta}(y\mid x))] \le \eta$ for a finite constant $\eta>0$.
\end{assumption}

\begin{assumption}[Independent per-neuron exploitation]
\label{asmp5}
Each non-ablated safety neuron is independently exploitable with probability $\alpha_l\in[0,1]$.
\end{assumption}

Extended discussion and verifiability remarks are provided in Appendix~\ref{sec:appendix_remarks}.
We then state the main result showing that ablating safety neurons provably reduces the
ASR upper bound relative to an undefended model, with the proof relying on three auxiliary
lemmas in Appendix~\ref{app:proof_lemma}: Lemma~\ref{lemma1} shows ablation shrinks the
adversarial subspace; Lemma~\ref{lemma2} provides Lipschitz control of output perturbations;
and Lemma~\ref{lemma3} translates bounded KL divergence into bounded $\ell_1$ distance
via Pinsker's inequality.

\begin{tcolorbox}[
  width=1\linewidth,
  left=0.5mm,
  right=0.5mm,
  top=0.5mm,
  bottom=0.5mm,
  before upper={%
    \setlength{\leftskip}{0pt}%
    \setlength{\rightskip}{0pt}%
  }
]
\begin{theorem}[ASR upper-bound reduction]
\label{theorem1}

Suppose Assumptions~\ref{asmp1}--\ref{asmp5} hold.
Let $s_l = |S_l|$ denote the number of safety neurons at layer $l$,
$\rho_l\in(0,1]$ the ablation coverage ratio, and $\alpha_l\in(0,1)$
the per-neuron attack success probability.
Define the neuron-level exploitation probabilities as
$P_{\mathrm{nl}} = 1 - (1-\alpha_l)^{s_l}$ and the exploitation probability under our defense as
$P_{\mathrm{ours}} = 1 - (1-\alpha_l)^{(1-\rho_l)s_l}$. Let $\mathcal{B}(\mathrm{ASR})$ denote the theoretical upper bound of the attack success rate.
If\;
$\eta \le \big((1-\alpha_l)^{(1-\rho_l)s_l} - (1-\alpha_l)^{s_l}\big)^2/2$,\;
then $\mathcal{B}(\mathrm{ASR}_{\mathrm{nl}}) \ge \mathcal{B}(\mathrm{ASR}_{\mathrm{ours}})$,
i.e., \alg strictly reduces the ASR upper bound relative to the undefended model.

\end{theorem}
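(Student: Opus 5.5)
The plan is to build explicit upper bounds $\mathcal{B}(\mathrm{ASR}_{\mathrm{nl}})$ and $\mathcal{B}(\mathrm{ASR}_{\mathrm{ours}})$ out of three pieces: classifier error, neuron exploitation probability, and (for the defended model) a distributional-mismatch penalty. We then show the penalty is dominated by the gap in exploitation probabilities exactly when the stated condition on $\eta$ holds. For both models, an attack on a harmful query $x$ succeeds only if one of two things happens: either the linear safety gate of Assumption~\ref{asmp1} already misclassifies $x$ (probability at most $\delta_u$), or the attacker successfully exploits at least one safety neuron within its admissible support. By Assumption~\ref{asmp5} and independence, the second event has probability $1-(1-\alpha_l)^{k}$, where $k$ is the number of exploitable safety neurons. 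Hence for the undefended model we take $\mathcal{B}(\mathrm{ASR}_{\mathrm{nl}}) = \delta_u + P_{\mathrm{nl}}$ with $k=s_l$.

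For the defended model, Lemma~\ref{lemma1} together with Assumption~\ref{asmp2} shows that the effective support shrinks to $S_l\setminus\mathcal{I}_{\text{safety},l}$, whose size is $(1-\rho_l)s_l$ by the definition of $\rho_l$. This gives the term $P_{\mathrm{ours}}$. The catch is that this count is for the \emph{ablated} network, while the deployed model is the standard one. To transfer the refusal behavior across, we invoke Lemma~\ref{lemma3}: Assumption~\ref{asmp4} plus Pinsker gives $\|p_{\theta_{\text{abl}}}-p_\theta\|_1\le\sqrt{2\eta}$. The probability of a harmful (non-refusal) output event therefore differs between the two models by at most $\sqrt{2\eta}/2\cdot 2$. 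We will fix the total-variation convention so that the additive term is exactly $\sqrt{2\eta}$. Lemma~\ref{lemma2} (Lipschitz control, Assumption~\ref{asmp3}) ensures that bounded perturbations with $\|\delta\|_2\le\varepsilon$ keep us in the regime where the event-wise comparison is valid, contributing a term common to both bounds that cancels. This yields $\mathcal{B}(\mathrm{ASR}_{\mathrm{ours}}) = \delta_u + P_{\mathrm{ours}} + \sqrt{2\eta}$.

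Subtracting the two bounds gives
\[
\mathcal{B}(\mathrm{ASR}_{\mathrm{nl}})-\mathcal{B}(\mathrm{ASR}_{\mathrm{ours}}) = (1-\alpha_l)^{(1-\rho_l)s_l}-(1-\alpha_l)^{s_l}-\sqrt{2\eta}.
\]
The first difference is positive, since $\alpha_l\in(0,1)$, $\rho_l>0$ and the exponent is smaller. Squaring, the hypothesis $\eta\le(\cdot)^2/2$ is exactly $\sqrt{2\eta}\le(1-\alpha_l)^{(1-\rho_l)s_l}-(1-\alpha_l)^{s_l}$, which gives the claimed inequality. The main obstacle is the transfer step, i.e.\ justifying that the KL penalty enters \emph{additively} with constant exactly matching $\sqrt{2\eta}$. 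This requires being careful that the KL bound holds in expectation over $\mathcal{D}_{\text{unsafe}}$; we would apply Jensen to move the square root outside the expectation, and adopt the $\ell_1$ (not TV) form so the constant aligns with the theorem's threshold.
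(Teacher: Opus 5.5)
Your proposal is correct and follows essentially the paper's route: a union bound into a common baseline-failure term plus the exploitation probability $1-(1-\alpha_l)^k$ (with $k=s_l$ versus $(1-\rho_l)s_l$ from Lemma~\ref{lemma1}), an additive $\sqrt{2\eta}$ penalty for the defended model from Pinsker and Jensen (Lemma~\ref{lemma3}, whose $\ell_1$ form is loose by a factor of two for event probabilities but still valid), and then subtraction with the hypothesis rewritten as $\sqrt{2\eta}\le P_{\mathrm{nl}}-P_{\mathrm{ours}}$. The only difference is cosmetic: the paper's common term is the clean-margin probability $\Pr[m(x)\le C_L\varepsilon]$ tied to Lemma~\ref{lemma2} rather than your $\delta_u$, and since this term cancels in the difference the choice does not affect the result.
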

\end{tcolorbox}

The sufficient condition on $\eta$ formalizes an intuitive trade-off: ablation reduces the
attacker's exploitable subspace, but introduces a distributional shift between the ablated
and standard models. When KL regularization is tight enough, the defense's ASR upper bound
is strictly lower than that of the undefended model (proof in Appendix~\ref{app:proof_theorem}).


\section{Experiments} \label{sec:exp}

\begin{table*}[t]
\centering
\footnotesize
\addtolength{\tabcolsep}{-1pt}
\begin{tabular}{l l | cccccccc}
\toprule
Model & Task & No defense & Perplexity & SmoothLLM & GradSafe & CAT & LED & SafeNeuron & \cellcolor{greyL}\alg \\
\midrule
\multirow{4}{*}{Llama}
& SST2 & 0.93 & 0.93 & 0.83 & 0.91 & 0.86 & 0.88 & 0.87 & \cellcolor{greyL}0.92 \\
& AGNews & 0.91 & 0.91 & 0.82 & 0.88 & 0.84 & 0.87 & 0.82 & \cellcolor{greyL}0.90 \\
& CoLA & 0.82 & 0.82 & 0.74 & 0.80 & 0.76 & 0.75 & 0.76 & \cellcolor{greyL}0.79 \\
& GSM8K & 0.65 & 0.65 & 0.55 & 0.63 & 0.58 & 0.61 & 0.60 & \cellcolor{greyL}0.62 \\
\midrule
\multirow{4}{*}{Qwen}
& SST2 & 0.92 & 0.92 & 0.85 & 0.88 & 0.85 & 0.86 & 0.86 & \cellcolor{greyL}0.89 \\
& AGNews & 0.90 & 0.90 & 0.84 & 0.87 & 0.83 & 0.85 & 0.81 & \cellcolor{greyL}0.88 \\
& CoLA & 0.82 & 0.82 & 0.74 & 0.80 & 0.76 & 0.73 & 0.76 & \cellcolor{greyL}0.78 \\
& GSM8K & 0.65 & 0.65 & 0.46 & 0.62 & 0.58 & 0.55 & 0.60 & \cellcolor{greyL}0.58 \\
\midrule
\multirow{4}{*}{Falcon}
& SST2 & 0.94 & 0.94 & 0.87 & 0.92 & 0.87 & 0.88 & 0.88 & \cellcolor{greyL}0.93 \\
& AGNews & 0.91 & 0.91 & 0.84 & 0.90 & 0.84 & 0.86 & 0.82 & \cellcolor{greyL}0.88 \\
& CoLA & 0.83 & 0.83 & 0.76 & 0.77 & 0.77 & 0.75 & 0.77 & \cellcolor{greyL}0.78 \\
& GSM8K & 0.68 & 0.68 & 0.57 & 0.61 & 0.61 & 0.62 & 0.63 & \cellcolor{greyL}0.66 \\
\bottomrule
\end{tabular}
\caption{ACC$\uparrow$ of different methods across three models and four fine-tuning tasks.}
\label{tab:utility_acc}
\end{table*}

\begin{table*}[t]
\centering
\footnotesize
\addtolength{\tabcolsep}{-1pt}
\begin{tabular}{l l | cccccccc}
\toprule
Model & Attack & No defense & Perplexity & SmoothLLM & GradSafe & CAT & LED & SafeNeuron & \cellcolor{greyL}\alg \\
\midrule
\multirow{6}{*}{Llama}
& PAIR & 0.17 & 0.14 & 0.15 & 0.09 & 0.10 & 0.07 & 0.04 & \cellcolor{greyL}0.01 \\
& TAP & 0.15 & 0.13 & 0.13 & 0.08 & 0.12 & 0.09 & 0.05 & \cellcolor{greyL}0.00 \\
& Puzzler & 0.29 & 0.25 & 0.23 & 0.12 & 0.17 & 0.13 & 0.07 & \cellcolor{greyL}0.01 \\
& GCG & 0.35 & 0.16 & 0.27 & 0.15 & 0.09 & 0.05 & 0.03 & \cellcolor{greyL}0.01 \\
& AD & 0.31 & 0.28 & 0.24 & 0.11 & 0.13 & 0.09 & 0.05 & \cellcolor{greyL}0.01 \\
& NS & 0.89 & 0.64 & 0.83 & 0.72 & 0.58 & 0.49 & 0.22 & \cellcolor{greyL}0.04 \\
\midrule
\multirow{6}{*}{Qwen}
& PAIR & 0.23 & 0.18 & 0.19 & 0.14 & 0.18 & 0.14 & 0.06 & \cellcolor{greyL}0.00 \\
& TAP & 0.25 & 0.21 & 0.10 & 0.16 & 0.17 & 0.13 & 0.07 & \cellcolor{greyL}0.01 \\
& Puzzler & 0.41 & 0.37 & 0.39 & 0.25 & 0.26 & 0.20 & 0.10 & \cellcolor{greyL}0.01 \\
& GCG & 0.40 & 0.18 & 0.33 & 0.17 & 0.13 & 0.08 & 0.04 & \cellcolor{greyL}0.01 \\
& AD & 0.36 & 0.36 & 0.23 & 0.18 & 0.20 & 0.15 & 0.07 & \cellcolor{greyL}0.01 \\
& NS & 0.83 & 0.57 & 0.68 & 0.68 & 0.53 & 0.42 & 0.21 & \cellcolor{greyL}0.02 \\
\midrule
\multirow{6}{*}{Falcon}
& PAIR & 0.15 & 0.12 & 0.12 & 0.07 & 0.05 & 0.02 & 0.03 & \cellcolor{greyL}0.01 \\
& TAP & 0.16 & 0.15 & 0.15 & 0.08 & 0.07 & 0.04 & 0.02 & \cellcolor{greyL}0.00 \\
& Puzzler & 0.30 & 0.25 & 0.27 & 0.09 & 0.09 & 0.05 & 0.03 & \cellcolor{greyL}0.01 \\
& GCG & 0.28 & 0.11 & 0.24 & 0.13 & 0.16 & 0.12 & 0.06 & \cellcolor{greyL}0.01 \\
& AD & 0.22 & 0.20 & 0.16 & 0.11 & 0.13 & 0.09 & 0.05 & \cellcolor{greyL}0.01 \\
& NS & 0.81 & 0.71 & 0.74 & 0.59 & 0.66 & 0.57 & 0.28 & \cellcolor{greyL}0.02 \\
\bottomrule
\end{tabular}
\caption{ASR$\downarrow$ of different methods under six attack types across three models on the SST2 task.}
\label{tab:jailbreak_asr}
\vspace{-.2in}
\end{table*}

\subsection{Experimental setup}

\myparatight{Models, datasets, and evaluation metrics}%
We conduct experiments on three pre-trained LLMs: Llama-3.1-8B-Instruct~\citep{grattafiori2024llama}, Qwen2.5-7B-Instruct~\citep{qwen2.5}, and Falcon3-7B-Instruct~\citep{Falcon3}, and fine-tune LLMs on four datasets: SST2~\citep{socher2013recursive}, AGNews~\citep{zhang2015character}, CoLA~\citep{warstadt2019neural}, and GSM8K~\citep{cobbe2021GSM8K}. We assess each defense using two complementary metrics. \textbf{Utility accuracy (ACC)} measures task performance after fine-tuning, where a higher ACC indicates better preservation of model utility. \textbf{Attack success rate (ASR)} measures the fraction of unsafe responses produced under harmful queries, where a lower ASR indicates more effective safety preservation. Following prior work~\citep{inan2023llama,wu2025neurostrike}, we employ Llama-Guard-3-8B~\citep{grattafiori2024llama} as the safety judge. We report the license of all models and datasets we used in this paper in Appendix~\ref{app:licenses}.

\myparatight{Baselines and attacks}%
We compare our method against six defense baselines. Three are prompt-level defenses: Perplexity Filter~\citep{alon2023detecting}, SmoothLLM~\citep{robey2023smoothllm}, and GradSafe~\citep{xie2024gradsafe}. Three are model-level defenses: CAT~\citep{xhonneux2024efficient}, LED~\citep{zhao2024defending}, and SafeNeuron~\citep{wang2026safeneuron}. Detailed descriptions are provided in Appendix~\ref{sec:appendix_baseline}. We evaluate all defenses against six representative attacks spanning both attack families discussed in Section~\ref{sec:problem}. PAIR~\citep{chao2025jailbreaking}, TAP~\citep{mehrotra2024tree}, and Puzzler~\citep{chang2024play} are generation-based jailbreak attacks that synthesize adversarial prompts to bypass safety mechanisms. GCG~\citep{zou2023universal} and AutoDAN~\citep{liu2023autodan} are optimization-based jailbreak attacks that use gradient-based methods to craft token sequences that maximize the likelihood of unsafe responses. NeuroStrike~\citep{wu2025neurostrike} is a neuron-level attack that selectively prunes safety-critical neurons to disable safety mechanisms.

\myparatight{Parameter settings}%
We adopt LoRA~\citep{hu2022lora} for fine-tuning, with rank $r = 32$ and scaling factor $\alpha = 64$, fine-tuning for 10 epochs on four utility tasks each containing 2,000 samples. The learning rate $\gamma$ across all experiments is set to $10^{-5}$. 
For all baselines and attacks, we use the default hyperparameter settings reported in the original works. For our method, the fraction of masked neurons $\rho$ is set to 0.05, and the refresh interval $N$ for the linear classifier is set to 200 steps. The safety probe set consists of 750 harmful queries from $D_{\text{unsafe}}$ and 750 benign queries from $D_{\text{safe}}$, both sampled from BeaverTails~\citep{ji2023beavertails}. 
We use StrongREJECT~\citep{souly2024strongreject} as the test dataset to compute ASR. All experimental results are reported as averages over 10 different random seeds, using two NVIDIA H100 GPUs, each with 94GB of memory.

\subsection{Experimental results}

\myparatight{\alg outperforms baselines}%
Table~\ref{tab:utility_acc} reports the ACC of fine-tuned models on four tasks under various defenses. Our proposed \alg preserves high ACC and closely matches the ``No defense'' setting, indicating strong utility without performance degradation. For instance, on Falcon (SST2), \alg achieves 0.93 ACC versus 0.94 under No defense, and on Llama (AGNews), it attains 0.90 versus 0.91.
Table~\ref{tab:jailbreak_asr} reports the ASR under six attacks on SST2 (AutoDAN and NeuroStrike are abbreviated as AD and NS); Table~\ref{tab:jailbreak_asr_agnews} to Table~\ref{tab:jailbreak_asr_gsm8k} (Appendix) report the ASR for AGNews, CoLA, and GSM8K respectively. Existing baselines fail to provide effective defense under adversarial conditions. 
In contrast,
\alg reduces ASR to nearly zero across all attacks, achieving 0.00 under PAIR on Qwen and 0.04 under NeuroStrike on Llama. Overall, \alg effectively defends against diverse attacks while maintaining high ACC on benign tasks.

\myparatight{Impact of $\rho$}%
We investigate how the selection ratio $\rho$ in the top-$\rho$ criterion, which controls the fraction of ablated safety-critical neurons, affects \alg. Table~\ref{tab:topk} (Appendix) reports ACC on SST2 and ASR under different attacks for different $\rho$. A larger $\rho$ zeros out more safety neurons, compelling the remaining neurons to learn safety-relevant representations more intensively. Overall, \alg exhibits stable and robust performance even at low $\rho$, demonstrating the effectiveness of our approach in broadcasting safety signals.

\myparatight{Impact of $N$}%
We investigate the effect of the refresh interval $N$, which controls how frequently safety-critical neurons are re-identified. Table~\ref{tab:refresh_N} (Appendix) reports ACC on SST2, ASR, and fine-tuning time for varying $N$. A smaller $N$ keeps neuron estimates accurate but incurs higher overhead, while a larger $N$ improves efficiency but may rely on outdated estimates. Overall, \alg maintains robust performance across a wide range of $N$, demonstrating that periodic re-identification effectively balances efficiency and accuracy.

\myparatight{Impact of the size of $D_\text{safe}$ and $D_\text{unsafe}$}%
We investigate the impact of the sizes of $D_\text{safe}$ and $D_\text{unsafe}$, used to identify safety-critical neurons and train refusal responses. Table~\ref{tab:safe_validation} (Appendix) reports ASR and ACC for dataset sizes ranging from 500 to 2,500 samples (set equally). Defense performance marginally improves as size increases, eventually reducing ASR to near zero. Notably, \alg achieves highly effective defense with as few as 500 samples, demonstrating that our method is data-efficient and requires only limited safety data to robustly broadcast safety signals.

\myparatight{Computation cost of \alg}%
Figure~\ref{fig:computational_overhead} (Appendix) reports the running time of each method on SST2 under NeuroStrike, 
while test-time methods (e.g., Perplexity) report only inference time. The overhead of \alg remains on the same order of magnitude as ``No defense''. Since \alg operates during fine-tuning, it incurs no additional cost at inference time, unlike test-time defenses that require extra operations per input, making \alg particularly well suited to efficiency-critical deployment scenarios.

\section{Discussion} 
\label{sec:discussion_limitation}

\myparatight{Different variants of \alg}%
To evaluate each component's contribution, we compare \alg against four variants: \textbf{Variant I} uses only refusal training $L_{\text{ref}}$ without neuron ablation; \textbf{Variant II} includes neuron ablation but omits the consistency regularizer $L_{\text{cons}}$; \textbf{Variant III} minimizes $L_{\text{user}} + L_{\text{safe}}$ without gradient projection; and \textbf{Variant IV} projects two gradients simultaneously in each step rather than using randomized projection. Table~\ref{tab:variants} (Appendix) reports ASR and ACC on SST2 across different attacks. Removing any component leads to either higher ASR or lower ACC, while the full \alg achieves the lowest ASR with high task accuracy, demonstrating that all components are indispensable.

\myparatight{Adaptive attacks}%
To evaluate worst-case resilience, we construct two adaptive attacks where the adversary has full knowledge of \alg. The first, iterative pruning (IP), repeatedly identifies and removes safety-related neurons on the progressively pruned model to eliminate newly emerging safety mechanisms. The second, nonlinear evasion (NE), leverages nonlinear models to uncover and remove safety-relevant neurons beyond those detectable by linear methods. More details are provided in Appendix~\ref{app:Adaptive_Attacks}. Table~\ref{tab:adaptive_attack} (Appendix) reports ASR under these attacks on SST2 (ACC remains identical to Table~\ref{tab:utility_acc} on Llama).
While existing defenses are highly vulnerable, \alg remains robust.
Further analysis of \alg's robustness under adaptive attacks is presented in Appendix~\ref{app:Adaptive_Attacks_analy}.

\myparatight{\alg is effective for the multimodal scenario}%
Safety is also critical for vision language models (VLMs). To assess \alg in multimodal settings, we evaluate ACC and ASR under image inputs on Qwen2.5-VL-7B-Instruct~\citep{qwen2.5-VL}. For the fine-tuning task, we use SST2 in text-to-image (T2I) format, with both $D_\text{safe}$ and $D_\text{unsafe}$ constructed in T2I format from BeaverTails~\citep{ji2023beavertails}. We consider two attack settings: (i) harmful queries from StrongREJECT~\citep{souly2024strongreject} converted to T2I format, and (ii) NSFW images from the NSFW Detection dataset~\citep{nsfw_detect_dataset}, all resized to $224\times224$. We focus on NeuroStrike, as it naturally extends to VLMs and is the strongest attack considered. Table~\ref{tab:multimodal} (Appendix) reports ACC on SST2 and ASR under both settings (SR denotes StrongREJECT). 
\alg matches no-defense ACC and outperforms all baselines in ASR, demonstrating strong robustness in multimodal scenarios.

\myparatight{Effectiveness of dynamic neuron identification in \alg}%
A key design choice of \alg is to refresh the safety-critical neuron set periodically rather than fix it once in advance. To isolate its contribution, we replace \alg's dynamic identification with the static identification of SafeNeuron~\citep{wang2026safeneuron}, keeping all other components unchanged. Table~\ref{tab:identification} (Appendix) reports ACC and ASR on SST2 with Llama. The ASR of static variant under NeuroStrike attack rises from 0.04 to 0.29, suggesting that a one-shot estimate becomes stale as safety signals redistribute during fine-tuning.

\myparatight{Sensitivity of \alg to the loss weighting}%
In Eq.~(\ref{eq:total}), $L_{\text{user}}$ and $L_{\text{safe}}$ both carry a coefficient of 1. To check that this choice is not fragile, we introduce a weight $\lambda_{\text{safe}}$ on $L_{\text{safe}}$ purely for this analysis and vary it to $0.5$ and $2$. Table~\ref{tab:lambda_safe} (Appendix) reports ASR and ACC on SST2 with Llama. Results are stable across the range, and \alg keeps ASR at most 0.08 under all six attacks regardless of the weighting. A larger $\lambda_{\text{safe}}$ trades accuracy for marginal ASR gains, so the default $\lambda_{\text{safe}}=1$ already sits at the balance point.


\section{Conclusion} \label{sec:conclusion}

We presented \alg, a fine-tuning-stage defense that redistributes safety signals across a broader set of neurons to address the root cause of alignment fragility. Through ablation-robust optimization, KL-divergence regularization, and randomized gradient projection, \alg hardens LLMs against both jailbreak and neuron-level attacks while preserving task utility, achieving near-zero ASR across diverse attack settings, multimodal scenarios, and white-box adaptive adversaries. These findings highlight the importance of improving safety robustness at the neuron level. We hope \alg can inspire future defenses that make safety alignment more resilient.
\section{Limitations}
\alg operates during the fine-tuning stage and therefore assumes that the defender has full control over the training pipeline, including the training data, model weights, and optimization objectives. This assumption may not hold in deployment scenarios where the model is released without any further fine-tuning, where the defender lacks the computational resources required to perform parameter updates, or where only black-box access to the model is available through an API. In such settings, complementary inference-time defenses would be needed to fill the gap.

\section{Ethical considerations}

The sole purpose of this paper is to advance the safety and robustness of LLMs against post-alignment attacks, with the goal of promoting the responsible deployment of AI systems in real-world applications. We do not encourage, facilitate, or endorse any malicious use of the attack methods discussed in this work. All jailbreak and neuron-level attacks examined in our experiments are drawn from prior publicly available research and are evaluated solely in controlled settings to benchmark the effectiveness of our defense framework, \alg.
In our evaluation pipeline, we employ an LLM to assess attack success rates by determining whether a given model response violates policy. Although our work focuses on defensive alignment, analyzing safety-critical neurons and adaptive attack strategies could potentially inform stronger future attacks. We therefore encourage responsible disclosure and careful deployment of neuron-level safety analyses. The jailbreak prompts and evaluation data used in this work are obtained from publicly available benchmarks and prior research artifacts. We do not collect or use personally identifying information, and all experiments are conducted in controlled research settings involving potentially unsafe or offensive content solely for safety evaluation purposes.

\section*{Acknowledgments}

We thank the reviewers for their constructive comments.
This work was supported by the National Artificial Intelligence Research Resource (NAIRR) Pilot under Award Nos. 250513 and 260142.

\bibliography{refs}

\appendix

\newpage

\begin{algorithm*}[t]
	\caption{Our \alg}
	\label{alg:main}
	\begin{algorithmic}[1]
		\Require User fine-tuning dataset $D_{\text{user}}$, benign query dataset $D_\text{safe}$,
		harmful query dataset $D_\text{unsafe}$, pre-trained model parameters $\theta_0$,
		classifier refresh interval $N$, learning rate $\gamma$
		\Ensure Hardened model $\widehat{\theta}$ with robust safety
		\State Initialize $\theta \leftarrow \theta_0$
		\For{each fine-tuning step $t = 1, 2, \dots$}
		\State \textbf{// Stage 1: Identifying safety-critical neurons}
		\If{$t \bmod N = 1$}
		\State For each layer $l$, train classifier $\hat{y}_l(x)$ from $D_\text{safe} \cup D_\text{unsafe}$ based on Eq.~(\ref{eq:classifier})
		\State Identify $\mathcal{I}_{\text{safety}, l}$ based on Eq.~(\ref{eq:topfrac})
		\EndIf
		\State \textbf{// Stage 2: Ablation-robust safety optimization}
		\State Compute $L_{\text{ref}}$ on harmful batch based on Eq.~(\ref{eq:lr})
		\State Compute $L_{\text{ref}}^{\text{abl}}$ under ablated forward pass based on Eq.~(\ref{eq:lr_abl})
		\State Compute $L_{\text{cons}}$ between ablated and standard passes based on Eq.~(\ref{eq:kl})
		\State Set $L_{\text{safe}} = L_{\text{ref}} + L_{\text{ref}}^{\text{abl}} + L_{\text{cons}}$ based on Eq.~(\ref{eq:defense})
		\State \textbf{// Stage 3: Resolving gradient conflicts via randomized projection}
		\State Compute $g_{\text{user}} = \nabla_{\theta} L_{\text{user}}$, $g_{\text{safe}} = \nabla_{\theta} L_{\text{safe}}$
		\If{$\langle g_{\text{user}},\, g_{\text{safe}} \rangle < 0$}
		\State Sample ordering $(g_1, g_2)$ of $(g_{\text{user}}, g_{\text{safe}})$ and derive $\hat{g}_1$, $\hat{g}_2$ based on Eq.~(\ref{eq:projection})
		\Else
		\State Set $\hat{g}_1 = g_{\text{user}}$, $\hat{g}_2 = g_{\text{safe}}$
		\EndIf
		\State Compute $g_{\text{final}} = \hat{g}_1 + \hat{g}_2$ based on Eq.~(\ref{eq:final_grad})
		\State Update $\theta \leftarrow \theta - \gamma \cdot g_{\text{final}}$
		\EndFor
		\State $\widehat{\theta} \leftarrow \theta$
		\State \Return $\widehat{\theta}$
	\end{algorithmic}
\end{algorithm*}

\begin{figure*}[t]
    \centering
    \includegraphics[width=0.95\textwidth]{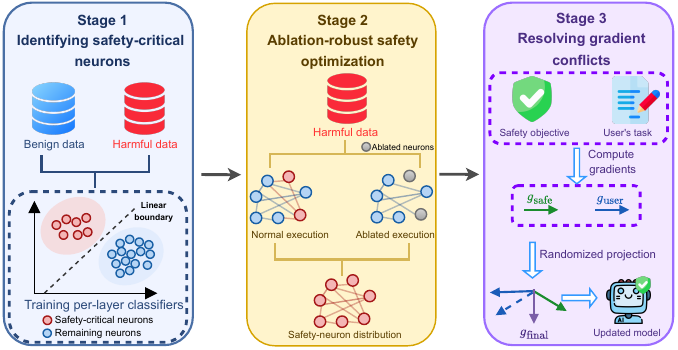}
    \caption{Overview of \alg, which hardens safety alignment against post-alignment attacks via three coupled stages. \textbf{Stage 1} trains per-layer linear classifiers on benign and harmful activations to identify the sparse safety-critical neuron subset. \textbf{Stage 2} deliberately ablates these neurons during training, forcing safety representations to redistribute across a broader neuron population via normal and ablated forward passes with distributional consistency regularization. \textbf{Stage 3} resolves conflicts between the safety gradient $g_{\text{safe}}$ and the user task gradient $g_{\text{user}}$ via randomized projection, producing a final update $g_{\text{final}}$ that preserves both robustness and utility.}
    \label{fig:neuronguard}
\end{figure*}

\FloatBarrier

\section{Discussion of assumptions}
\label{sec:appendix_remarks}

We provide the extended discussion of the assumptions stated in Section~\ref{theoretical_analysis}.

\myparatight{On Assumption~\ref{asmp1} (linear separability and ablation coverage)}
While the exact internal safety boundary and the population-level set $S_l$
are not directly observable, prior work in mechanistic interpretability suggests
that high-level behaviors in neural networks, including safety or
refusal-related behaviors, often concentrate in a sparse subset of specialized
neurons \citep{bau2017network,wu2025neurostrike}.
Separately, empirical studies on activation-based analysis indicate that simple
classifiers trained on intermediate representations can achieve non-trivial
separation between benign and unsafe (or refusal-related) prompts, supporting
approximate linear separability at certain layers \citep{teo2025blessing}.
Assumption~\ref{asmp1} is particularly suitable
for safety-aligned models with stable refusal patterns or narrowly defined unsafe
behaviors, where unsafe prompts induce consistent activation shifts at certain
layers.
In practice, the quantities $\delta_u$, $\delta_b$, and $\rho_l$ can be estimated
on held-out data, allowing the validity of the assumption to be partially verified
empirically rather than taken as an oracle condition.

Note that $\rho_l$ is a theoretical quantity distinct from the hyperparameter
$\rho$: $\rho$ controls the \emph{size} of the selected set, while $\rho_l$
measures the \emph{overlap} with the true safety-neuron set $S_l$.

\myparatight{On Assumption~\ref{asmp4} (bounded KL divergence)}
The direction $D_{\mathrm{KL}}(p_{\theta_{\text{abl}}}\|p_{\theta})$
treats the standard distribution as the reference,
so minimizing it enforces prediction consistency while maintaining consistent
refusal behavior on unsafe prompts.

\myparatight{Notation details}
The safety alignment objective is formally defined as
\begin{align}
\max_{\theta}\ 
\mathbb{E}_{x \sim \mathcal{D}_{\mathrm{unsafe}}}
\ \mathbb{E}_{y \sim p_\theta(\cdot \mid x)}
\big[ R_{\mathrm{refuse}}(x, y) \big],
\end{align}
where $R_{\mathrm{refuse}}$ is a human-defined reward that assigns
higher values to refusal or other policy-compliant responses and
penalizes harmful generations.
The population distributions $\mathcal{D}_{\mathrm{unsafe}}$ and
$\mathcal{D}_{\mathrm{safe}}$ are the theoretical counterparts of
the empirical datasets $D_{\mathrm{unsafe}}$ and $D_{\mathrm{safe}}$
used in Section~\ref{our_method}.

Following \citep{wu2025neurostrike}, we define an internal decision boundary
\begin{align} 
\Upsilon(h_{l}(x);\phi)=
\begin{cases}
1, & x\in\mathcal{D}_{\text{unsafe}},\\
0, & x\in\mathcal{D}_{\text{safe}},
\end{cases} 
\end{align} 
where $\phi\subseteq\theta$ denotes the subset of model parameters responsible
for discriminating between unsafe and benign activations within the layer.
The activation vector at layer $l$ is
$h_{l}(x)=[h_{1,l}(x),\,h_{2,l}(x),\,\ldots,\,h_{d_l,l}(x)]^\top$
with $h_{i,l}(x)$ denoting the activation of neuron $i$ at layer $l$
(following NeuroStrike, we omit the token index for notational simplicity).
$S_l$ is the population-level safety-neuron set, whereas
$\mathcal{I}_{\text{safety},l}$ (defined in Section~\ref{subsec:identify}) is
its empirical approximation obtained via the top-$\rho$ linear-classifier
selection rule.

Given the identified safety-critical neuron set, the ablated forward pass zeros out activations:
\begin{align}
\tilde{h}_{i,l}(x) =
\begin{cases}
0, & \text{if } i \in \mathcal{I}_{\text{safety},l}, \\
h_{i,l}(x), & \text{otherwise}.
\end{cases}
\end{align}
The resulting ablated activation vector $\tilde{h}_{l}(x)$ is then propagated
through the remaining subnetwork from layer $l+1$ to the output layer.

\section{Proof of Theorem~\ref{theorem1}} 
\label{sec:appendix_1}

In this appendix, we provide detailed proofs for the auxiliary lemmas and for the main theorem.
All results are derived under the notation and assumptions in the main text, in particular:
(i) the per-layer safety-neuron set $S_{l}$ defined in Section~\ref{theoretical_analysis}, where $h_{i,l}(x)$ denotes the activation of neuron $i$ at layer $l$;
(ii) the probe-selected safety-critical neuron index set $\mathcal{I}_{\text{safety},l}\subseteq[d_l]$
(identified via the top-$\rho$ selection on classifier weights $W_l$ from Section~\ref{subsec:identify})
with ablation coverage ratio
$
\rho_{l} := |S_{l} \cap \mathcal{I}_{\text{safety},l}|\,/\,|S_{l}|;
$
(iii) the attacker's constrained perturbation set at layer $l$ given in Assumption~\ref{asmp2};
and (iv) all probabilities and expectations are taken over the unsafe input distribution
$\mathcal{D}_{\text{unsafe}}$ unless otherwise stated.

\subsection{Proof of auxiliary lemmas}\label{app:proof_lemma}

\begin{lemma}[Ablation shrinks the adversarial subspace]
Let $S_{l} \subseteq \{1,\dots,d_l\}$ be the true safety-neuron index set at layer $l$,
and let $\mathcal{I}_{\text{safety},l}\subseteq[d_l]$ be the set of safety-critical neuron
indices identified by the probe, where $i\in\mathcal{I}_{\text{safety},l}$ means neuron $i$
is ablated (zeroed out in the forward pass).
Define the ablation coverage ratio as
\begin{align}
    \rho_{l} := \frac{|S_{l} \cap \mathcal{I}_{\text{safety},l}|}{|S_{l}|} \in [0,1].
\end{align}
Under Assumption~\ref{asmp2}, after ablation, the attacker's effective perturbation
set is reduced from $S_{l}$ to
$
S_{l}^{\text{eff}} := S_{l} \setminus \mathcal{I}_{\text{safety},l},
$
and therefore
\begin{align}
    |S_{l}^{\text{eff}}| = |S_{l}|(1 - \rho_{l}).
\end{align}
\label{lemma1}
\end{lemma}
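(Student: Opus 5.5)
The plan is to prove Lemma~\ref{lemma1} by elementary set arithmetic, after fixing what ``effective perturbation set'' means via Assumption~\ref{asmp2}. There are two claims: the identification $S_{l}^{\text{eff}} = S_{l}\setminus\mathcal{I}_{\text{safety},l}$, and the cardinality formula.

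\textbf{Step 1: identify the effective set.} By Assumption~\ref{asmp2}, the attacker's perturbation $\delta$ is supported on $S_l$. The ablated forward pass replaces every coordinate $i\in\mathcal{I}_{\text{safety},l}$ by $\tilde h_{i,l}=0$, regardless of the incoming activation. Any perturbation component $\delta_i$ with $i\in S_l\cap\mathcal{I}_{\text{safety},l}$ therefore has no effect on $\tilde h_l$, and hence none on the downstream output. Formally, the perturbed ablated vector equals the unperturbed ablated vector plus $\delta$ restricted to $S_l\setminus\mathcal{I}_{\text{safety},l}$. So the only coordinates through which the attacker can influence the output form $S_l\setminus\mathcal{I}_{\text{safety},l}$, which is exactly the set Assumption~\ref{asmp2} already stipulates. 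This yields $S_l^{\text{eff}}$ as defined.

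\textbf{Step 2: count.} Write $S_l$ as the disjoint union $(S_l\cap\mathcal{I}_{\text{safety},l})\sqcup(S_l\setminus\mathcal{I}_{\text{safety},l})$. Then
\[
|S_l^{\text{eff}}| = |S_l| - |S_l\cap\mathcal{I}_{\text{safety},l}| = |S_l| - \rho_l|S_l| = |S_l|(1-\rho_l).
\]
The middle equality uses the definition of $\rho_l$, which requires $|S_l|>0$; this holds because $s_l\ge 1$ is implicit in the definition of $\rho_l$. The bound $\rho_l\in[0,1]$ follows from $0\le|S_l\cap\mathcal{I}_{\text{safety},l}|\le|S_l|$.

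\textbf{Main obstacle.} The only real subtlety is Step 1, namely justifying that zeroing a coordinate annihilates the attacker's perturbation there. This depends on the order of operations: ablation must be applied after the attacker's perturbation, that is, the attacker perturbs the pre-ablation activation. I would state that convention explicitly and note that Assumption~\ref{asmp2} encodes it. Under the opposite order, where the attacker could overwrite an already-ablated coordinate, the lemma would simply be taken as part of the assumption. The counting step is routine.
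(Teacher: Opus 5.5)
Your proposal is correct and takes essentially the same route as the paper. The paper also argues that a perturbation on any coordinate in $\mathcal{I}_{\text{safety},l}$ is nullified because the ablated forward pass always uses $0$ there, which leaves exactly $S_l\setminus\mathcal{I}_{\text{safety},l}$; it then reads off $|S_l^{\text{eff}}| = |S_l|-\rho_l|S_l|$ from the definition of $\rho_l$. Your remarks on the order of operations (perturbation before ablation) and on needing $|S_l|>0$ make explicit what the paper leaves implicit, and the paper's proof also appends the comparison $P_{\mathrm{ours}}<P_{nl}$, which is not part of this statement.
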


\begin{proof}
For a vector $\delta \in \mathbb{R}^{d_l}$, let $\mathrm{supp}(\delta) := \{\, i \mid \delta_i \neq 0 \,\}$ denote its support. We also denote by $\Delta_l(x) \in \mathbb{R}^{d_l}$ the adversarial perturbation injected at layer $l$ for input $x$, and write $\Delta_{l,i}(x)$ for its $i$-th coordinate. By Assumption~\ref{asmp2}, prior to ablation, the attacker is allowed to perturb only the coordinates in $S_{l}$. Then
\begin{align}
    \nonumber \Delta_{l}(x) \in \mathcal{U}_{l}
:= \{\, \delta \in \mathbb{R}^{d_l} \mid \mathrm{supp}(\delta) \subseteq S_{l},
\\ \ \|\delta\|_{2} \le \varepsilon \,\}.
\end{align}
Therefore, if $i \notin S_{l}$, then necessarily $\Delta_{l,i}(x)=0$.

Our defense produces the ablated activation at each coordinate:
\begin{align}
    \tilde{h}_{i,l}(x) =
\begin{cases}
0, & \text{if } i \in \mathcal{I}_{\text{safety},l},\\
h_{i,l}(x), & \text{otherwise},
\end{cases}
\end{align}
where $h_{i,l}(x)$ is the original activation of neuron $i$ at layer $l$.

For any index with $i \in \mathcal{I}_{\text{safety},l}$, the forward pass always uses $0$
on that coordinate, so any attacker perturbation on this coordinate is nullified.
Hence, for a perturbation component to be effective, the index must (i) belong to
the original attackable set $S_{l}$, and (ii) not be ablated,
i.e.\ $i \notin \mathcal{I}_{\text{safety},l}$.
The set of such coordinates is exactly
$
S_{l}^{\text{eff}} = S_{l} \setminus \mathcal{I}_{\text{safety},l}.
$
By definition of $\rho_{l}$:
\begin{equation}
\begin{aligned}
&|S_{l} \cap \mathcal{I}_{\text{safety},l}| = \rho_{l} |S_{l}|
\\
&\Longrightarrow 
|S_{l}^{\text{eff}}|
= |S_{l}| - \rho_{l} |S_{l}|
= |S_{l}|(1 - \rho_{l}).
\end{aligned}
\end{equation}
Thus, ablation shrinks the adversarial subspace by a factor of $(1-\rho_{l})$.

We now illustrate the probability of successful neuron-level exploitation of our
defense compared to NeuroStrike.

Recall that $\alpha_l \in [0,1]$ denotes the per-neuron attack success probability
(Assumption~\ref{asmp5}), i.e., the probability that an adaptive attacker can
successfully exploit a single available safety neuron in $S_l$.
In the NeuroStrike baseline, no ablation defense is applied, and thus all neurons in $S_l$ remain available for manipulation.
Let $|S_l| = s_l$.
Under the assumption of independent per-neuron attack attempts, the probability
that at least one neuron in $S_l$ is successfully exploited is:
\begin{equation}
P_{nl}
= 1 - (1 - \alpha_l)^{s_l}.
\label{eq:neurostrike-p}
\end{equation}
In our ablation defense, only a fraction $(1-\rho_l)$ of neurons remain non-ablated
and therefore available for attack.
Let the number of such non-ablated neurons be
$s_l^{\mathrm{rem}} = (1-\rho_l)\,s_l$.
The corresponding probability that at least one non-ablated neuron is successfully
exploited is:
\begin{equation}
\begin{aligned}
P_{\mathrm{ours}}
&= 1 - (1 - \alpha_l)^{s_l^{\mathrm{rem}}} \\
&= 1 - (1 - \alpha_l)^{(1-\rho_l)s_l}.
\end{aligned}
\label{eq:ours-p}
\end{equation}
Since $(1-\rho_l)s_l < s_l$ and $\alpha_l \in (0,1)$, we have
\begin{equation}
P_{\mathrm{ours}} < P_{nl}.
\label{eq:prob-comp}
\end{equation}
This inequality rigorously shows that, for the same per-neuron attackability
$\alpha_l$, our ablation defense strictly reduces the probability of successful
neuron-level exploitation compared to NeuroStrike.
\end{proof}

\begin{lemma}[Lipschitz control of output change after ablation/attack]
Let $h_{l}(x)$ be the original activation vector at layer $l$, with components
$h_{i,l}(x)$, and let $\tilde{h}_{l}(x)$ be the ablated activation vector with
components
$\tilde{h}_{i,l}(x) = 0$ if $i \in \mathcal{I}_{\text{safety},l}$, and
$\tilde{h}_{i,l}(x) = h_{i,l}(x)$ otherwise.
Under Assumption~\ref{asmp3}, for any input $x$ and any admissible attack
$\delta$ supported on $S_{l}^{\text{eff}}$ with $\|\delta\|_{2} \le \varepsilon$, we have
\begin{equation}
\begin{aligned}
&\big\| z_{\theta}(\tilde{h}_{l}(x) + \delta) - z_{\theta}(\tilde{h}_{l}(x)) \big\|_{2}
\le C_{L} \, \varepsilon,
\\
&\big\| z_{\theta}({h}_{l}(x) + \delta) - z_{\theta}({h}_{l}(x)) \big\|_{2}
\le C_{L} \, \varepsilon.
\end{aligned}
\end{equation}
\label{lemma2}
\end{lemma}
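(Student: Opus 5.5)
The plan is a direct application of Assumption~\ref{asmp3}: the two inequalities in Lemma~\ref{lemma2} are the same statement applied at two base points, so we treat them identically.

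\emph{Base points.} Fix an input $x$ and an admissible attack $\delta$ with $\mathrm{supp}(\delta)\subseteq S_l^{\text{eff}}$ and $\|\delta\|_2\le\varepsilon$. For the first inequality we take $h_l=\tilde h_l(x)+\delta$ and $h_l'=\tilde h_l(x)$. For the second we take $h_l=h_l(x)+\delta$ and $h_l'=h_l(x)$.

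\emph{Applying Lipschitz continuity.} In both cases the difference of the arguments is exactly $\delta$. Assumption~\ref{asmp3} then gives
\[
\|z_\theta(h_l)-z_\theta(h_l')\|_2\le C_L\|\delta\|_2\le C_L\varepsilon .
\]
Note that the support condition on $\delta$ plays no role in the norm bound itself. Its only function, via Lemma~\ref{lemma1}, is to describe which perturbations survive ablation. Coordinates of $\delta$ in $\mathcal{I}_{\text{safety},l}$ are nullified anyway, so restricting to $S_l^{\text{eff}}$ loses no generality.

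\emph{Main obstacle: staying in the neighborhood.} The only delicate point is that Assumption~\ref{asmp3} is local: it holds in a neighborhood of $h_l(x)$, and all admissible perturbations are assumed to remain within it. The second inequality is therefore immediate.

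For the first inequality, the base point $\tilde h_l(x)$ differs from $h_l(x)$ by the ablated coordinates. That gap has norm $\|h_{\mathcal{I}_{\text{safety},l},l}(x)\|_2$, which need not be small. I would handle this in one of two ways:
\begin{enumerate}
\item read the ``admissible perturbations'' clause of Assumption~\ref{asmp3} as covering the ablation itself, so that $\tilde h_l(x)$ and $\tilde h_l(x)+\delta$ both lie in the Lipschitz neighborhood; or
\item state explicitly that Assumption~\ref{asmp3} is applied to the ablated network around $\tilde h_l(x)$, with the same constant $C_L$.
\end{enumerate}
Either reading lets the same one-line argument go through, and I would record this as a remark in the proof.
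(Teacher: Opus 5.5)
Your proof is correct and is the same one-line argument the paper uses: take $u=\tilde h_l(x)+\delta$ and $v=\tilde h_l(x)$ (and analogously $h_l(x)+\delta$ and $h_l(x)$), then invoke the Lipschitz bound to get $C_L\|\delta\|_2\le C_L\varepsilon$. The neighborhood mismatch you flag is real, and the paper resolves it as in your option~2: its proof restates Assumption~\ref{asmp3} as Lipschitz continuity in a neighborhood of the ablated activation without comment, so your explicit remark is more careful than the original.
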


\begin{proof}
Assumption~\ref{asmp3} states that the subnetwork from layer $l$ to the output is
$C_{L}$-Lipschitz in a neighborhood of the ablated activation:
\begin{align}
    \| z_{\theta}(u) - z_{\theta}(v) \|_{2} \le C_{L} \|u-v\|_{2}
\end{align}
for all $u,v$ in that neighborhood.

Take $u = \tilde{h}_{l}(x) + \delta$ and $v = \tilde{h}_{l}(x)$. Then
\begin{align}
    \| z_{\theta}(u) - z_{\theta}(v) \|_{2}
\le C_{L} \|u - v\|_{2}
& \nonumber \\= C_{L} \|\delta\|_{2}
\le C_{L} \varepsilon,
\end{align}
since $\|\delta\|_{2} \le \varepsilon$ by the attack constraint.
Similarly, we obtain the second result by taking $u = {h}_{l}(x) + \delta$ and
$v = {h}_{l}(x)$.
\end{proof}

\begin{remark}
Lemma~\ref{lemma2} isolates only the effect of the attack on top of the ablated
activation $\tilde{h}_{l}(x)$; the ablation operation itself is absorbed
into $\tilde{h}_{l}(x)$.
\end{remark}

\begin{lemma}[Small expected KL implies small expected $\ell_{1}$ difference]
Under Assumption~\ref{asmp4}, we have
\begin{align}
\mathbb{E}_{x \sim \mathcal{D}_{\text{unsafe}}}
&\big[
\| p_{\theta_{\text{abl}}}(\cdot\mid x)
- p_{\theta}(\cdot\mid x) \|_{1}
\big]
\nonumber \\
&\le \sqrt{2\eta}.
\end{align}
\label{lemma3}
\end{lemma}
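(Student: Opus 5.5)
The plan is to apply Pinsker's inequality pointwise and then Jensen's inequality to pass from the expected KL bound to the expected $\ell_1$ bound. Pinsker's inequality states that for any two distributions $P,Q$ on the same (finite) vocabulary,
\begin{align}
\|P-Q\|_1 \le \sqrt{2\,D_{\mathrm{KL}}(P\|Q)}.
\end{align}
Note that here $\|\cdot\|_1$ is the full $\ell_1$ norm, which is twice the total variation distance. This is the form that matches the constant $\sqrt{2\eta}$.

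First, fix an input $x$ and take $P=p_{\theta_{\text{abl}}}(\cdot\mid x)$ and $Q=p_\theta(\cdot\mid x)$. Both are valid distributions over the same output space: the softmax gives full support, so the KL divergence is finite. Pinsker then yields
\begin{align}
\|p_{\theta_{\text{abl}}}(\cdot\mid x)-p_\theta(\cdot\mid x)\|_1\le\sqrt{2D_{\mathrm{KL}}(p_{\theta_{\text{abl}}}(\cdot\mid x)\|p_\theta(\cdot\mid x))}.
\end{align}

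Second, take expectations over $x\sim\mathcal{D}_{\text{unsafe}}$. Since $t\mapsto\sqrt{2t}$ is concave on $[0,\infty)$, Jensen's inequality gives
\begin{align}
\mathbb{E}\big[\sqrt{2D_{\mathrm{KL}}}\big]\le\sqrt{2\,\mathbb{E}[D_{\mathrm{KL}}]}.
\end{align}
Assumption~\ref{asmp4} bounds $\mathbb{E}[D_{\mathrm{KL}}]\le\eta$, and monotonicity of the square root then gives the claimed bound $\sqrt{2\eta}$.

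The only step needing care is the normalization convention in Pinsker's inequality. In the total-variation form, $\mathrm{TV}\le\sqrt{D_{\mathrm{KL}}/2}$ with $\mathrm{TV}=\tfrac12\|\cdot\|_1$, which converts to exactly $\|\cdot\|_1\le\sqrt{2D_{\mathrm{KL}}}$, so the constant is consistent. The remaining technical points are measurability and integrability of $x\mapsto D_{\mathrm{KL}}(\cdot)$, which hold because Assumption~\ref{asmp4} presupposes that the expectation is finite. The direction of the KL divergence does not matter, since Pinsker's inequality holds for either argument order.
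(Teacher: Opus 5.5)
Your proof is correct and follows essentially the same route as the paper: Pinsker's inequality applied pointwise to $p_{\theta_{\text{abl}}}(\cdot\mid x)$ and $p_\theta(\cdot\mid x)$, then Jensen's inequality for the concave square root, then Assumption~\ref{asmp4}. Your check of the normalization ($\mathrm{TV}\le\sqrt{D_{\mathrm{KL}}/2}$, hence $\|\cdot\|_1\le\sqrt{2D_{\mathrm{KL}}}$) is right. Pinsker also holds on general measurable output spaces, so restricting to a finite vocabulary is not needed.
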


\begin{proof}
Let
\begin{align}
Z(x)
&:= D_{\mathrm{KL}}\!\big(
p_{\theta_{\text{abl}}}(y\mid x)
\,\|\,
p_{\theta}(y\mid x)
\big)
\nonumber \\
&\ge 0.
\end{align}
By Assumption~\ref{asmp4} we know that
$
\mathbb{E}_{x \sim \mathcal{D}_{\text{unsafe}}}[Z(x)] \;\le\; \eta.
$

We now relate the KL divergence $Z(x)$ to the $\ell_{1}$ (total variation) distance
of the two output distributions.
For any fixed $x$, since $p_\theta(\cdot\mid x)$ and $p_{\theta_{\text{abl}}}(\cdot\mid x)$
are probability distributions over the same measurable output space, Pinsker's
inequality states that for two probability distributions $P$ and $Q$,
\begin{align}
    \|P - Q\|_{1} \;\le\; \sqrt{ 2\, D_{\mathrm{KL}}(P \| Q) }.
\end{align}
Applying this to $P = p_{\theta_{\text{abl}}}(\cdot\mid x)$ and
$Q = p_{\theta}(\cdot\mid x)$, we obtain for every $x$:
\begin{align}
    \big\| p_{\theta_{\text{abl}}}(\cdot\mid x) - p_{\theta}(\cdot\mid x) \big\|_{1}
\;\le\; \sqrt{ 2\, Z(x) }.
\end{align}
Taking expectation over $x \sim \mathcal{D}_{\text{unsafe}}$ on both sides yields
\begin{align}
\mathbb{E}_{x}\!
\big[
\| p_{\theta_{\text{abl}}}(\cdot\mid x)
- p_{\theta}(\cdot\mid x) \|_{1}
\big]
\nonumber \\
\le\;
\mathbb{E}_{x}\!
\big[
\sqrt{2\,Z(x)}
\big].
\end{align}
We upper-bound the right-hand side using Jensen's inequality.
Since $f(t) = \sqrt{t}$ is concave on $t \ge 0$, for any nonnegative random
variable $W$ we have $\mathbb{E}[ \sqrt{W} ] \le \sqrt{ \mathbb{E}[W] }$.
Letting $W = 2 Z(x)$:
\begin{align}
\mathbb{E}_{x}
\big[
\sqrt{2\,Z(x)}
\big]
&\le
\sqrt{
\mathbb{E}_{x}
\big[
2\,Z(x)
\big]
}
\nonumber \\
&=
\sqrt{
2\,\mathbb{E}_{x}
\big[
Z(x)
\big]
}.
\end{align}
Combining and invoking Assumption~\ref{asmp4} ($\mathbb{E}_{x}[Z(x)] \le \eta$):
\begin{align}
    \mathbb{E}_{x}
\big[ \| p_{\theta_{\text{abl}}}(\cdot\mid x) - p_{\theta}(\cdot\mid x) \|_{1} \big]
\;\le\;
\sqrt{ 2\, \eta }.
\end{align}
This proves the lemma.
\end{proof}

\subsection{Proof for Theorem~\ref{theorem1}}\label{app:proof_theorem}

\begin{proof}
We first recall the two ASR upper bounds derived under
Assumptions~\ref{asmp1}--\ref{asmp5}.

\myparatight{ASR of the NeuroStrike baseline}
By Lemma~\ref{lemma1}, before our ablation defense, the attacker can exploit any of
the $s_l = |S_l|$ safety neurons, and under the per-neuron success model with
probability $\alpha_l$ this happens with probability
\begin{align}
    P_{nl} = 1 - (1 - \alpha_l)^{s_l}.
\end{align}
By the Lipschitz control before ablation (Lemma~\ref{lemma2} applied to the standard
activation $h_l(x)$), an unsafe output can occur either due to successful
neuron-level exploitation or due to the insufficient clean safety margin under a
Lipschitz-bounded perturbation.
By a union bound over these two failure modes, we obtain the upper bound:
\begin{equation}
\begin{aligned}
\mathcal{B}(\mathrm{ASR}_{nl})
&=
P_{nl}
\\
&\quad+
\Pr_{x \sim \mathcal{D}_{\mathrm{unsafe}}}
\big[
m(x) \le C_L \varepsilon
\big].
\end{aligned}
\label{eq:proof-asr-ns}
\end{equation}
where $m(x)$ denotes the clean safety margin of the standard model on input $x$.

\myparatight{ASR of our defense}
With our ablation, Lemma~\ref{lemma1} shows the effective attackable neurons are
reduced to $(1-\rho_l)s_l$, hence the corresponding neuron-level success
probability is
\begin{align}
    P_{\mathrm{ours}} = 1 - (1 - \alpha_l)^{(1-\rho_l)s_l}.
\end{align}
An attack can still lead to unsafe generations after ablation.
Therefore, we upper bound the attack success rate by the probability of sampling an unsafe output under the ablated model.
Taking expectation over $x\sim\mathcal{D}_{\mathrm{unsafe}}$, and we denote by $\mathcal{Y}_{\mathrm{unsafe}}$ the set of all model outputs deemed unsafe or policy-violating under the safety criterion. We obtain the following upper bound on the expected ASR:
\begin{equation}
\begin{aligned}
\mathcal{B}(\mathrm{ASR}_{\mathrm{ours}})
&=
P_{\mathrm{ours}}
\\
&\quad+
\mathbb{E}_{x}
\Pr_{y\sim p_{\theta_{\text{abl}}}(\cdot\mid x)}
\big[
y\in\mathcal{Y}_{\mathrm{unsafe}}
\big].
\end{aligned}
\label{eq:proof-asr-ours}
\end{equation}
Moreover, for any unsafe prompt $x$, by the definition of total variation distance:
\begin{equation}
\begin{aligned}
&
\Pr_{\substack{
y\sim p_{\theta_{\text{abl}}}(\cdot\mid x)
}}
\big[
y\in\mathcal{Y}_{\mathrm{unsafe}}
\big]
\\
&\le
\Pr_{\substack{
y\sim p_\theta(\cdot\mid x)
}}
\big[
y\in\mathcal{Y}_{\mathrm{unsafe}}
\big]
\\
&\quad+
\big\|
p_{\theta_{\text{abl}}}(\cdot\mid x)
-
p_\theta(\cdot\mid x)
\big\|_1 .
\end{aligned}
\label{eq:prunsafe}
\end{equation}
Taking expectation over $x\sim\mathcal{D}_{\mathrm{unsafe}}$ and applying
Lemma~\ref{lemma3} yields:
\begin{align}
\mathbb{E}_{x}\Pr_{y\sim p_{\theta_{\text{abl}}}(\cdot\mid x)}[y\in\mathcal{Y}_{\mathrm{unsafe}}]
& \le
\Pr[m(x)\le C_L\varepsilon] \nonumber \\ 
& +
\sqrt{2\eta}.
\label{eq:expectation}
\end{align}

\myparatight{Taking the difference}
Subtracting the two ASR upper bounds yields:
\begin{equation}
\begin{aligned}
&\mathcal{B}(\mathrm{ASR}_{nl}) - \mathcal{B}(\mathrm{ASR}_{\mathrm{ours}})
\\
&=
\big( P_{nl} - P_{\mathrm{ours}} \big)
+
\Pr[m(x) \le C_L \varepsilon]
\\
&-
\mathbb{E}_{x}\Pr_{y\sim p_{\theta_{\text{abl}}}(\cdot\mid x)}[y\in\mathcal{Y}_{\mathrm{unsafe}}].
\label{eq:diff}
\end{aligned}
\end{equation}
Applying \eqref{eq:expectation} to \eqref{eq:diff} yields:
\begin{align}
\mathcal{B}(\mathrm{ASR}_{nl})
-
\mathcal{B}(\mathrm{ASR}_{\mathrm{ours}})
&\ge
\big(
P_{nl}
-
P_{\mathrm{ours}}
\big)
\nonumber \\
&\quad
-
\sqrt{2\eta}.
\end{align}
Recalling the neuron-level exploitation probabilities in Lemma~\ref{lemma1},
where $s_l = |S_l|$ and $\rho_l$ is the ablation coverage ratio:
\begin{equation}
\begin{aligned}
P_{nl} - P_{\mathrm{ours}}
&=
\Big(1 - (1-\alpha_l)^{s_l}\Big)
\\
&\quad
-
\Big(
1
-
(1-\alpha_l)^{(1-\rho_l)s_l}
\Big)
\\
&=
(1-\alpha_l)^{(1-\rho_l)s_l}
-
(1-\alpha_l)^{s_l}.
\end{aligned}
\label{eq:p-gap}
\end{equation}
Since $\rho_l\in(0,1)$ and $\alpha_l\in(0,1)$, we have $(1-\rho_l)s_l < s_l$ and
the map $t\mapsto (1-\alpha_l)^t$ is strictly decreasing in $t$, which implies
\begin{align}
(1-\alpha_l)^{(1-\rho_l)s_l} - (1-\alpha_l)^{s_l} > 0.
\end{align}
Combining this with the previous inequality yields:
\begin{equation}
\begin{aligned}
&\mathcal{B}(\mathrm{ASR}_{nl}) - \mathcal{B}(\mathrm{ASR}_{\mathrm{ours}})
\\
&\ge
\Big((1-\alpha_l)^{(1-\rho_l)s_l} - (1-\alpha_l)^{s_l}\Big)
-
\sqrt{2\eta}.
\label{eq:final-gap}
\end{aligned}
\end{equation}
Hence, under the sufficient condition
\begin{align}
\sqrt{2\eta}\ \le\ (1-\alpha_l)^{(1-\rho_l)s_l} - (1-\alpha_l)^{s_l},
\label{eq:suff-cond}
\end{align}
we obtain
\begin{align}
\mathcal{B}(\mathrm{ASR}_{nl}) - \mathcal{B}(\mathrm{ASR}_{\mathrm{ours}})\ge 0,
\end{align}
i.e., $\mathcal{B}(\mathrm{ASR}_{nl}) \ge \mathcal{B}(\mathrm{ASR}_{\mathrm{ours}})$.
This completes the proof.
\end{proof}

\section{License of models and datasets}
\label{app:licenses}
We summarize the licenses of all models and datasets used in this paper.

\myparatight{Models}%
Llama-3.1-8B-Instruct~\citep{grattafiori2024llama} and Llama-Guard-3-8B~\citep{grattafiori2024llama} are both released under the Llama 3.1 Community License, a custom commercial license provided by Meta Platforms, Inc.
Qwen2.5-7B-Instruct~\citep{qwen2.5} and Qwen2.5-VL-7B-Instruct~\citep{qwen2.5-VL} are released under the Apache License 2.0.
Falcon3-7B-Instruct~\citep{Falcon3} is released under the TII Falcon License 2.0.

\myparatight{Datasets}%
SST2~\citep{socher2013recursive} is made available by Stanford for research purposes; 
no explicit license is stated on the official website.
AGNews~\citep{zhang2015character} is provided by the academic community for research purposes, including data mining, information retrieval, and other non-commercial activities, with no explicit open-source license attached.
CoLA~\citep{warstadt2019neural} consists of sentences excerpted from published linguistics literature; the corpus and baseline code are made available under the MIT License.
GSM8K~\citep{cobbe2021GSM8K} is released by OpenAI under the MIT License.
StrongREJECT~\citep{souly2024strongreject} releases its custom-generated data and code under the MIT License.
The NSFW Detection dataset~\citep{nsfw_detect_dataset} is released under the MIT License.

\section{Details of baselines}
\label{sec:appendix_baseline}

\myparatight{Perplexity~\citep{alon2023detecting}}%
Perplexity serves as the filter by computing the model's own perplexity on the prompt and flags inputs whose perplexity exceeds a threshold to determine whether the prompt is a jailbreak attempt.

\myparatight{SmoothLLM~\citep{robey2023smoothllm}}%
SmoothLLM is a character-level defense that mitigates jailbreaking attacks by exploiting the fragility of adversarial prompts to random perturbations. It creates $N$ perturbed copies of an input prompt and passes them through the target LLM, then aggregates the resulting responses through a majority vote to detect and filter out adversarial inputs without requiring model retraining.

\myparatight{GradSafe~\citep{xie2024gradsafe}}%
GradSafe identifies jailbreak prompts by analyzing the gradients of an LLM's safety-critical parameters when a prompt is paired with a compliance response. The method utilizes the observation that adversarial prompts exhibit consistent gradient patterns across these parameters, which differ significantly from those of safe prompts.

\myparatight{CAT~\citep{xhonneux2024efficient}}%
CAT utilizes adversarial training in the continuous embedding space and uses a lightweight surrogate model to select the most effective adversarial perturbations from $K$ candidates before mapping them back to the discrete token space for fine-tuning, achieving robustness against various jailbreak attacks while maintaining the model's nominal performance.

\myparatight{LED~\citep{zhao2024defending}}%
LED is an editing-based defense that mitigates jailbreak attacks by identifying and editing specific ``safety-critical'' layers within an LLM. The method works by analyzing the differences in hidden state activations between safe and adversarial prompts to locate the layers most responsible for safety alignment.

\myparatight{SafeNeuron~\citep{wang2026safeneuron}}%
SafeNeuron identifies safety-critical neurons by comparing activations under safe and unsafe inputs using Activation Effect Size and Safety Activation Shift metrics, then freezes these neurons during direct preference optimization to force the model to construct redundant safety pathways across the remaining parameters, improving robustness against neuron-level pruning attacks.

\section{Details of adaptive attacks} 
\label{app:Adaptive_Attacks}

\myparatight{Iterative pruning (IP)}%
This attack directly targets the identification step described in Section~\ref{subsec:identify}. Specifically, the attacker trains per-layer linear classifiers in the same manner as \alg to identify the safety-neuron index sets $\mathcal{I}_{\text{safety}, l}$. The attack then proceeds in iterative rounds: in each round, the attacker re-trains the classifiers on the ablated model to identify a new set of neurons that have assumed safety-related responsibilities, and then permanently removes them by zeroing out the corresponding positions. We set the number of pruning rounds to 5.

\myparatight{Nonlinear evasion (NE)}%
This attack exploits the limitation of our neuron identification, which relies on linear classifiers. Specifically, instead of training per-layer linear classifiers, the attacker trains a two-layer MLP for each layer to search for alternative safety-neuron sets that capture nonlinear neuron interactions. The attack then ablates the identified neurons at inference time, targeting safety-relevant neurons that lie outside the linearly identified set and may not be captured by \alg.

\section{Robustness under adaptive attacks} 
\label{app:Adaptive_Attacks_analy}

\alg's resilience against adaptive adversaries is rooted in the fundamental asymmetry between what the attacker must accomplish and what the defense requires to succeed. Because safety signals are continuously redistributed across an ever-wider population of neurons throughout fine-tuning, an adversary with full knowledge of the defense architecture cannot simply target a fixed, locatable set of neurons and expect lasting success. By the time an attacker has identified and suppressed one generation of safety-bearing neurons, the training process has already propagated those representations into new neurons elsewhere in the network, effectively moving the goalposts with each successive step.

The iterative pruning attack illustrates this asymmetry clearly. This adaptive strategy mirrors \alg's own identification logic, using linear classifiers to locate safety neurons and removing them in successive rounds on the progressively weakened model. Yet the attacker must achieve complete elimination of safety behavior across all redundant sites, while the defense only needs to preserve it in a sufficient fraction of them. The ablation-robust optimization that \alg undergoes during fine-tuning is structurally identical to the pressure this attack applies, meaning the model has been explicitly conditioned to withstand repeated targeted suppression of its safety neurons. The result is an ASR of just 0.09, compared to 0.83 without any defense.

The nonlinear evasion attack takes a different approach by training MLPs to uncover safety-relevant neurons that interact in ways beyond what linear classifiers can detect, attempting to exploit the boundary of \alg's identification capability. Even so, the defense holds because its robustness does not hinge on neuron identification being exhaustive. Partial ablation coverage alone is sufficient to meaningfully shrink the set of neurons available for adversarial exploitation, and the KL-divergence regularization further constrains how far the ablated model's behavior can drift from the standard model. These two properties together provide a robustness floor that persists even when the attacker successfully probes beyond the linearly identifiable safety neurons, keeping ASR at just 0.05.

\begin{table*}[t]
\centering
\small
\addtolength{\tabcolsep}{-1pt}
\begin{tabular}{l l | cccccccc}
\toprule
Model & Attack & No defense & Perplexity & SmoothLLM & GradSafe & CAT & LED & SafeNeuron & \cellcolor{greyL}\alg \\
\midrule
\multirow{6}{*}{Llama}
& PAIR & 0.15 & 0.14 & 0.15 & 0.09 & 0.10 & 0.07 & 0.04 & \cellcolor{greyL}0.01 \\
& TAP & 0.15 & 0.13 & 0.11 & 0.08 & 0.12 & 0.09 & 0.05 & \cellcolor{greyL}0.00 \\
& Puzzler & 0.28 & 0.25 & 0.23 & 0.12 & 0.17 & 0.13 & 0.07 & \cellcolor{greyL}0.00 \\
& GCG & 0.35 & 0.16 & 0.27 & 0.15 & 0.09 & 0.05 & 0.04 & \cellcolor{greyL}0.01 \\
& AD & 0.31 & 0.28 & 0.26 & 0.11 & 0.13 & 0.09 & 0.05 & \cellcolor{greyL}0.01 \\
& NS & 0.89 & 0.65 & 0.86 & 0.72 & 0.58 & 0.49 & 0.22 & \cellcolor{greyL}0.02 \\
\midrule
\multirow{6}{*}{Qwen}
& PAIR & 0.23 & 0.18 & 0.19 & 0.14 & 0.18 & 0.17 & 0.06 & \cellcolor{greyL}0.00 \\
& TAP & 0.25 & 0.21 & 0.10 & 0.16 & 0.17 & 0.13 & 0.07 & \cellcolor{greyL}0.03 \\
& Puzzler & 0.41 & 0.37 & 0.39 & 0.25 & 0.29 & 0.20 & 0.10 & \cellcolor{greyL}0.01 \\
& GCG & 0.40 & 0.17 & 0.33 & 0.17 & 0.13 & 0.08 & 0.04 & \cellcolor{greyL}0.01 \\
& AD & 0.36 & 0.35 & 0.23 & 0.18 & 0.20 & 0.15 & 0.09 & \cellcolor{greyL}0.01 \\
& NS & 0.83 & 0.57 & 0.70 & 0.68 & 0.53 & 0.42 & 0.21 & \cellcolor{greyL}0.00 \\
\midrule
\multirow{6}{*}{Falcon}
& PAIR & 0.15 & 0.12 & 0.12 & 0.08 & 0.05 & 0.02 & 0.03 & \cellcolor{greyL}0.01 \\
& TAP & 0.16 & 0.15 & 0.15 & 0.06 & 0.07 & 0.03 & 0.02 & \cellcolor{greyL}0.00 \\
& Puzzler & 0.30 & 0.25 & 0.28 & 0.09 & 0.09 & 0.05 & 0.04 & \cellcolor{greyL}0.01 \\
& GCG & 0.28 & 0.13 & 0.24 & 0.13 & 0.16 & 0.12 & 0.06 & \cellcolor{greyL}0.00 \\
& AD & 0.22 & 0.20 & 0.14 & 0.11 & 0.13 & 0.09 & 0.04 & \cellcolor{greyL}0.01 \\
& NS & 0.81 & 0.70 & 0.74 & 0.59 & 0.66 & 0.57 & 0.28 & \cellcolor{greyL}0.02 \\
\bottomrule
\end{tabular}
\caption{ASR of different methods under six attack types across three models on the AGNews task.}
\label{tab:jailbreak_asr_agnews}

\end{table*}

\begin{table*}[t]
\centering
\small
\addtolength{\tabcolsep}{-1pt}
\begin{tabular}{l l | cccccccc}
\toprule
Model & Attack & No defense & Perplexity & SmoothLLM & GradSafe & CAT & LED & SafeNeuron & \cellcolor{greyL}\alg \\
\midrule
\multirow{6}{*}{Llama}
& PAIR & 0.17 & 0.16 & 0.15 & 0.07 & 0.10 & 0.07 & 0.04 & \cellcolor{greyL}0.01 \\
& TAP & 0.15 & 0.13 & 0.13 & 0.08 & 0.15 & 0.09 & 0.05 & \cellcolor{greyL}0.00 \\
& Puzzler & 0.29 & 0.25 & 0.23 & 0.12 & 0.19 & 0.13 & 0.07 & \cellcolor{greyL}0.01 \\
& GCG & 0.35 & 0.16 & 0.27 & 0.15 & 0.06 & 0.05 & 0.03 & \cellcolor{greyL}0.01 \\
& AD & 0.31 & 0.28 & 0.24 & 0.11 & 0.10 & 0.09 & 0.05 & \cellcolor{greyL}0.01 \\
& NS & 0.89 & 0.64 & 0.83 & 0.72 & 0.58 & 0.46 & 0.22 & \cellcolor{greyL}0.04 \\
\midrule
\multirow{6}{*}{Qwen}
& PAIR & 0.26 & 0.18 & 0.19 & 0.14 & 0.18 & 0.14 & 0.06 & \cellcolor{greyL}0.01 \\
& TAP & 0.24 & 0.21 & 0.10 & 0.16 & 0.17 & 0.13 & 0.07 & \cellcolor{greyL}0.01 \\
& Puzzler & 0.41 & 0.37 & 0.39 & 0.25 & 0.26 & 0.20 & 0.10 & \cellcolor{greyL}0.00 \\
& GCG & 0.40 & 0.18 & 0.33 & 0.17 & 0.13 & 0.08 & 0.04 & \cellcolor{greyL}0.00 \\
& AD & 0.36 & 0.38 & 0.23 & 0.18 & 0.20 & 0.15 & 0.04 & \cellcolor{greyL}0.01 \\
& NS & 0.83 & 0.57 & 0.68 & 0.68 & 0.54 & 0.42 & 0.21 & \cellcolor{greyL}0.00 \\
\midrule
\multirow{6}{*}{Falcon}
& PAIR & 0.15 & 0.12 & 0.12 & 0.05 & 0.05 & 0.02 & 0.03 & \cellcolor{greyL}0.01 \\
& TAP & 0.15 & 0.15 & 0.15 & 0.08 & 0.07 & 0.04 & 0.02 & \cellcolor{greyL}0.00 \\
& Puzzler & 0.30 & 0.26 & 0.27 & 0.09 & 0.08 & 0.05 & 0.03 & \cellcolor{greyL}0.01 \\
& GCG & 0.28 & 0.11 & 0.24 & 0.15 & 0.16 & 0.12 & 0.05 & \cellcolor{greyL}0.01 \\
& AD & 0.22 & 0.20 & 0.16 & 0.09 & 0.13 & 0.07 & 0.05 & \cellcolor{greyL}0.01 \\
& NS & 0.81 & 0.71 & 0.74 & 0.59 & 0.66 & 0.57 & 0.26 & \cellcolor{greyL}0.02 \\
\bottomrule
\end{tabular}
\caption{ASR of different methods under six attack types across three models on the CoLA task.}
\label{tab:jailbreak_asr_cola}

\end{table*}

\begin{table*}[t]
\centering
\small
\addtolength{\tabcolsep}{-1pt}
\begin{tabular}{l l | cccccccc}
\toprule
Model & Attack & No defense & Perplexity & SmoothLLM & GradSafe & CAT & LED & SafeNeuron & \cellcolor{greyL}\alg \\
\midrule
\multirow{6}{*}{Llama}
& PAIR & 0.17 & 0.14 & 0.15 & 0.10 & 0.10 & 0.07 & 0.04 & \cellcolor{greyL}0.01 \\
& TAP & 0.15 & 0.13 & 0.13 & 0.08 & 0.12 & 0.09 & 0.06 & \cellcolor{greyL}0.00 \\
& Puzzler & 0.29 & 0.23 & 0.23 & 0.12 & 0.17 & 0.13 & 0.07 & \cellcolor{greyL}0.01 \\
& GCG & 0.32 & 0.16 & 0.27 & 0.15 & 0.09 & 0.05 & 0.03 & \cellcolor{greyL}0.01 \\
& AD & 0.31 & 0.28 & 0.24 & 0.11 & 0.13 & 0.09 & 0.03 & \cellcolor{greyL}0.01 \\
& NS & 0.91 & 0.66 & 0.83 & 0.72 & 0.58 & 0.49 & 0.22 & \cellcolor{greyL}0.04 \\
\midrule
\multirow{6}{*}{Qwen}
& PAIR & 0.23 & 0.18 & 0.19 & 0.14 & 0.18 & 0.15 & 0.09 & \cellcolor{greyL}0.00 \\
& TAP & 0.23 & 0.21 & 0.10 & 0.16 & 0.17 & 0.13 & 0.07 & \cellcolor{greyL}0.01 \\
& Puzzler & 0.39 & 0.37 & 0.39 & 0.25 & 0.26 & 0.20 & 0.10 & \cellcolor{greyL}0.00 \\
& GCG & 0.40 & 0.16 & 0.33 & 0.17 & 0.13 & 0.08 & 0.04 & \cellcolor{greyL}0.01 \\
& AD & 0.36 & 0.35 & 0.23 & 0.18 & 0.22 & 0.15 & 0.07 & \cellcolor{greyL}0.01 \\
& NS & 0.83 & 0.58 & 0.68 & 0.68 & 0.55 & 0.42 & 0.21 & \cellcolor{greyL}0.02 \\
\midrule
\multirow{6}{*}{Falcon}
& PAIR & 0.15 & 0.12 & 0.12 & 0.07 & 0.03 & 0.02 & 0.03 & \cellcolor{greyL}0.01 \\
& TAP & 0.16 & 0.15 & 0.15 & 0.08 & 0.07 & 0.04 & 0.02 & \cellcolor{greyL}0.00 \\
& Puzzler & 0.30 & 0.25 & 0.27 & 0.07 & 0.09 & 0.05 & 0.03 & \cellcolor{greyL}0.01 \\
& GCG & 0.28 & 0.09 & 0.24 & 0.13 & 0.16 & 0.12 & 0.06 & \cellcolor{greyL}0.03 \\
& AD & 0.22 & 0.20 & 0.16 & 0.11 & 0.10 & 0.09 & 0.05 & \cellcolor{greyL}0.01 \\
& NS & 0.81 & 0.71 & 0.74 & 0.59 & 0.65 & 0.57 & 0.28 & \cellcolor{greyL}0.02 \\
\bottomrule
\end{tabular}
\caption{ASR of different methods under six attack types across three models on the GSM8K task.}
\label{tab:jailbreak_asr_gsm8k}

\end{table*}

\begin{figure*}[t]
    \centering
    \includegraphics[width=0.8\linewidth]{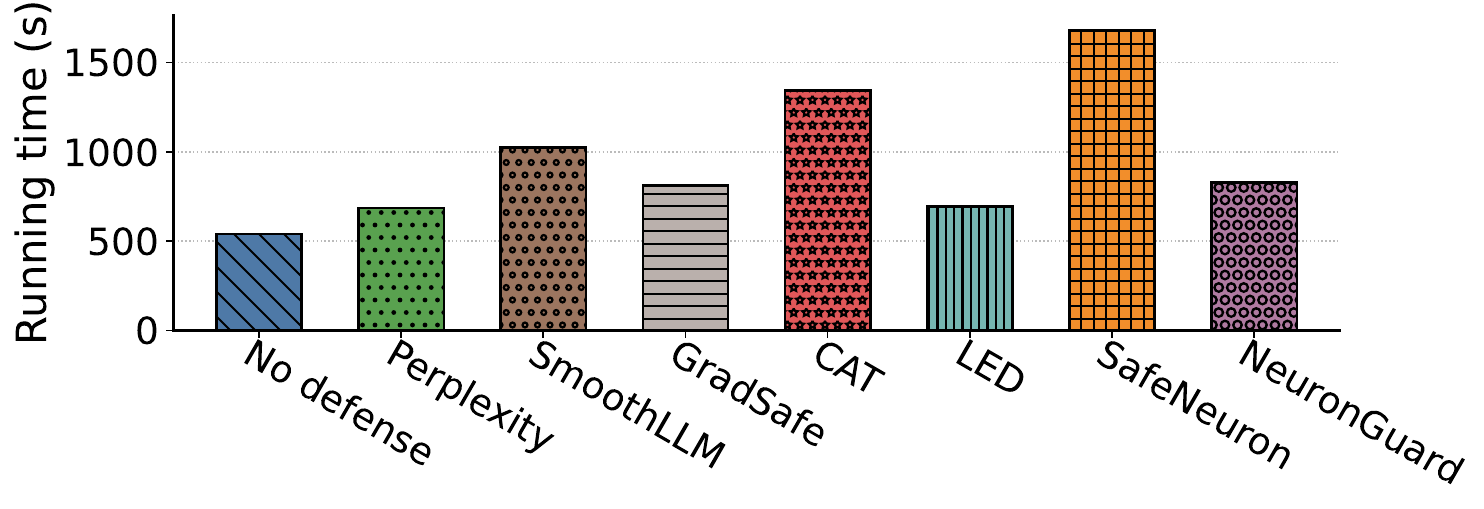}
    \caption{Overhead (seconds) for different methods.}
    \label{fig:computational_overhead}
\end{figure*}

\begin{table*}[t]
\centering
\small
\begin{tabular}{c | cccccc | c}
\toprule
\multirow{2}{*}{$\rho$} 
& PAIR & TAP & Puzzler & GCG & AutoDAN & NeuroStrike & SST2 \\
\cmidrule(lr){2-8}
& ASR & ASR & ASR & ASR & ASR & ASR & ACC \\
\midrule
0.01
& 0.08 & 0.07 & 0.09 & 0.08 & 0.07 & 0.10  & 0.93  \\
0.05
& 0.01 & 0.00 & 0.01 & 0.01 & 0.01 & 0.04  & 0.92  \\
0.10
& 0.01 & 0.00 & 0.01 & 0.01 & 0.01 & 0.01  & 0.92  \\
0.15 
& 0.00 & 0.00 & 0.01 & 0.00 & 0.01 & 0.01  & 0.91  \\
0.20  
& 0.00 & 0.00 & 0.00 & 0.00 & 0.00 & 0.00  & 0.88  \\
0.25 
& 0.00 & 0.00 & 0.00 & 0.00 & 0.00 & 0.00  & 0.84  \\
\bottomrule
\end{tabular}
\caption{Results for the fraction of masked neurons $\rho$ under different attacks on the SST2 task with the Llama model.}
\label{tab:topk}
\end{table*}

\begin{table*}[t]
\centering
\small
\begin{tabular}{c | cccccc | c c}
\toprule
\multirow{2}{*}{$N$} 
& PAIR & TAP & Puzzler & GCG & AutoDAN & NeuroStrike & SST2 & Time \\
\cmidrule(lr){2-9}
& ASR & ASR & ASR & ASR & ASR & ASR & ACC & seconds \\
\midrule
50
& 0.01 & 0.00 & 0.01 & 0.01 & 0.01 & 0.01  & 0.88 & 2628.41 \\
100 
& 0.01 & 0.00 & 0.01 & 0.01 & 0.01 & 0.01  & 0.90 & 1426.58 \\
200 
& 0.01 & 0.00 & 0.01 & 0.01 & 0.01 & 0.04  & 0.93 & 829.18 \\
400  
& 0.04 & 0.03 & 0.05 & 0.04 & 0.04 & 0.10  & 0.93 & 673.45 \\
800 
& 0.14 & 0.12 & 0.17 & 0.15 & 0.13 & 0.26  & 0.93 & 619.49 \\
\bottomrule
\end{tabular}
\caption{Results for the refresh interval $N$ under different attacks on the SST2 task with the Llama model.}
\label{tab:refresh_N}
\end{table*}

\begin{table*}[t]
\centering
\small
\begin{tabular}{c | cccccc | c}
\toprule
\multirow{2}{*}{Size} 
& PAIR & TAP & Puzzler & GCG & AutoDAN & NeuroStrike & SST2 \\
\cmidrule(lr){2-8}
& ASR & ASR & ASR & ASR & ASR & ASR & ACC \\
\midrule
500
& 0.06 & 0.05 & 0.07 & 0.06 & 0.06 & 0.09  & 0.92  \\
750
& 0.01 & 0.00 & 0.01 & 0.01 & 0.01 & 0.04  & 0.92  \\
1250
& 0.01 & 0.00 & 0.01 & 0.01 & 0.01 & 0.05  & 0.90  \\
2500 
& 0.01 & 0.00 & 0.01 & 0.01 & 0.00 & 0.01  & 0.89  \\
\bottomrule
\end{tabular}
\caption{Results of \alg with varying sizes of $D_{\text{safe}}$ and $D_{\text{unsafe}}$ during fine-tuning under different attacks on SST2 with the Llama model.}
\label{tab:safe_validation}
\end{table*}

\begin{table*}[t]
\centering
\small
\begin{tabular}{l | cccccc | c}
\toprule
\multirow{2}{*}{Variant} 
& PAIR & TAP & Puzzler & GCG & AutoDAN & NeuroStrike & SST2 \\
\cmidrule(lr){2-8}
& ASR & ASR & ASR & ASR & ASR & ASR & ACC \\
\midrule
Variant I
& 0.08 & 0.07 & 0.11 & 0.08 & 0.09 & 0.71  & 0.92  \\
Variant II
& 0.10 & 0.09 & 0.13 & 0.11 & 0.10 & 0.78  & 0.89  \\
Variant III
& 0.11 & 0.10 & 0.15 & 0.13 & 0.12 & 0.80  & 0.59  \\
Variant IV
& 0.08 & 0.06 & 0.14 & 0.16 & 0.17 & 0.22  & 0.66  \\

\rowcolor{greyL}
\alg
& 0.01 & 0.00 & 0.01 & 0.01 & 0.01 & 0.04  & 0.92  \\
\bottomrule
\end{tabular}
\caption{Results of different variants of \alg under different attacks on the SST2 task with the Llama model.}
\label{tab:variants}
\end{table*}

\begin{table*}[t]
\centering
\small
\begin{tabular}{l |cccccccc}
\toprule
Attack & No defense & Perplexity & SmoothLLM & GradSafe & CAT & LED & SafeNeuron & \cellcolor{greyL}\alg \\
\midrule
IP & 0.83 & 0.59 & 0.64 & 0.62 & 0.57 & 0.39 & 0.42 & \cellcolor{greyL}0.09 \\
NE  & 0.75 & 0.52 & 0.45 & 0.49 & 0.45 & 0.43 & 0.33 & \cellcolor{greyL}0.05 \\
\bottomrule
\end{tabular}
\caption{ASR of different methods under adaptive attacks on the SST2 task with the Llama model.}
\label{tab:adaptive_attack}
\end{table*}

\begin{table*}[t]
\centering
\small
\addtolength{\tabcolsep}{-1pt}
\begin{tabular}{l l |cccccccc}
\toprule
Dataset & Metric
& No defense 
& Perplexity 
& SmoothLLM 
& GradSafe 
& CAT 
& LED 
& SafeNeuron
& \cellcolor{greyL}\alg \\
\midrule
SST2 & ACC & 0.93 & 0.92 & 0.75 & 0.90 & 0.85& 0.85& 0.86& \cellcolor{greyL}0.89\\
\midrule
SR   & ASR & 0.99 & 0.79 & 0.75 & 0.70 & 0.68& 0.53 & 0.37& \cellcolor{greyL}0.02 \\
NSFW & ASR & 0.96 & 0.82 & 0.74 & 0.67 & 0.60& 0.75 & 0.44& \cellcolor{greyL}0.01 \\
\bottomrule
\end{tabular}
\caption{Results of different methods in multimodal scenarios under the NeuroStrike attack on the SST2 task.}
\label{tab:multimodal}
\end{table*}

\begin{table*}[t]
\centering
\small
\begin{tabular}{l | cccccc | c}
\toprule
\multirow{2}{*}{Identification strategy} 
& PAIR & TAP & Puzzler & GCG & AutoDAN & NeuroStrike & SST2 \\
\cmidrule(lr){2-8}
& ASR & ASR & ASR & ASR & ASR & ASR & ACC \\
\midrule
SafeNeuron
& 0.04 & 0.05 & 0.07 & 0.03 & 0.05 & 0.22  & 0.87  \\
\alg w/ static identification
& 0.04 & 0.03 & 0.05 & 0.04 & 0.05 & 0.29  & 0.92  \\

\rowcolor{greyL}
\alg
& 0.01 & 0.00 & 0.01 & 0.01 & 0.01 & 0.04  & 0.92  \\
\bottomrule
\end{tabular}
\caption{Results of static versus dynamic identification under different attacks on the SST2 task with the Llama model.}
\label{tab:identification}
\end{table*}

\begin{table*}[t]
\centering
\small
\begin{tabular}{c | cccccc | c}
\toprule
\multirow{2}{*}{$\lambda_{\text{safe}}$} 
& PAIR & TAP & Puzzler & GCG & AutoDAN & NeuroStrike & SST2 \\
\cmidrule(lr){2-8}
& ASR & ASR & ASR & ASR & ASR & ASR & ACC \\
\midrule
0.5
& 0.03 & 0.01 & 0.03 & 0.02 & 0.02 & 0.08  & 0.93  \\
1
& 0.01 & 0.00 & 0.01 & 0.01 & 0.01 & 0.04  & 0.92  \\
2
& 0.01 & 0.00 & 0.01 & 0.00 & 0.01 & 0.02  & 0.89  \\
\bottomrule
\end{tabular}
\caption{Results for the loss weight $\lambda_{\text{safe}}$ under different attacks on the SST2 task with the Llama model.}
\label{tab:lambda_safe}
\end{table*}

\end{document}